\newtheorem{theorem}{Theorem}
\newtheorem{lemma}[theorem]{Lemma}
\newtheorem{assumption}[theorem]{Assumption}
\theoremstyle{definition}
\newtheorem{definition}[theorem]{Definition}
\newtheorem{proposition}[theorem]{Proposition}
\begin{document}                                                 
\title[Energetic variational approaches for non-Newtonian fluid systems]{Energetic variational approaches for non-Newtonian fluid systems}

%    Only \author and \address are required; other information is
%    optional.  Remove any unused author tags.
%    Enter the address for every author, even if some are the same.

%    author one information

%    author two information
\author{Hajime Koba}
\address{ Graduate School of Engineering Science, Osaka University\\
1-3 Machikaneyamacho, Toyonaka, Osaka, 560-8531, Japan, Japan}
%\curraddr{}
\email{iti@sigmath.es.osaka-u.ac.jp}
\thanks{This work was partly supported by the Japan Society for the Promotion of Science (JSPS) KAKENHI Grant Number JP15K17580.}

\author{Kazuki Sato}
\address{ Graduate School of Engineering Science, Osaka University\\
1-3 Machikaneyamacho, Toyonaka, Osaka, 560-8531, Japan, Japan}
%\curraddr{}
\email{k-sato@sigmath.es.osaka-u.ac.jp}

\date{}
                      
\keywords{Mathematical modeling, Energetic variational approach, Non-Newtonian fluid}                                   
\subjclass[2010]{Primary 49S05; Secondary 49Q20}

\begin{abstract}
We consider the dominant equations for the motion of the non-Newtonian fluid in a domain from an energetic point of view. We apply our energetic variational approaches and the first law of thermodynamics to derive the generalized compressible non-Newtonian fluid system. We also derive the generalized incompressible non-Newtonian fluid system by using our energetic variational approaches.
\end{abstract}

\maketitle

%\tableofcontents

\section{Introduction}\label{sect1}

The viscosities of yogurt, mayonnaise, and ketchup are higher than the one of the water. Of course, the motion of the particles in such fluids is different from the one of the water. We call such a fluid a non-Newtonian fluid. In this paper, we focus on several energies dissipation due to the viscosities of the non-Newtonian fluid to study the dominant equations for the motion of the fluid. We apply our energetic variational approaches to make a mathematical model of the non-Newtonian fluid flow. Although the system is abstract, our results make it possible to give some previous models of non-Newtonian fluid flow to a mathematical validity. Since mathematical derivation of the pressure of compressible fluid is different from the one of the pressure of incompressible fluid, this paper deals with both compressible and incompressible non-Newtonian fluids.

We first introduce fundamental notation. Let $t \geq 0$ be the time variable, and $x = { }^t (x_1 , x_2, x_3 )$, $\xi = { }^t ( \xi_1 , \xi_2 , \xi_3 ) \in \mathbb{R}^3$ the space variables. Let $\Omega (t) (= \{ \Omega (t) \}_{0 \leq t < T})$ be a bounded $C^2$-domain in $\mathbb{R}^3$ depending on time $t \in [0, T )$ for some $T \in ( 0, \infty ]$. The notation $\rho = \rho (x,t)$, $v = v (x,t) = { }^t ( v_1 , v_2 , v_3 )$, $\sigma = \sigma (x,t)$, $\theta = \theta (x,t)$, $e = e ( x , t)$, $e_A = e_A (x,t)$, $h = h (x,t)$, $s = s (x,t)$, and $e_F = e_F (x,t)$ represent the \emph{density}, the \emph{velocity}, the \emph{pressure}, the \emph{temperature}, the \emph{internal energy}, the \emph{total energy}, the \emph{enthalpy}, the \emph{entropy}, and the \emph{free energy} of the fluid in the domain $\Omega ( t )$, respectively. The notation $F = F ( x , t ) = { }^t ( F_1 , F_2, F_3 )$ and $C = C (x,t)$ are the \emph{exterior force (gravity)} and the \emph{concentration of amount of substance} in the fluid in the domain $\Omega (t)$, respectively.

This paper has three purposes. The first one is to make an abstract model of compressible non-Newtonian fluid flow from an energetic point of view. More precisely, we apply both an energetic variational approach and the first law of thermodynamics to derive the following generalized \emph{compressible non-Newtonian fluid system}:
\begin{equation}\label{eq11}
\begin{cases}
D_t \rho + ( {\rm{div}} v ) \rho = 0& \text{ in } \Omega_T,\\
\rho D_t v + {\rm{grad}} \sigma = {\rm{div}} S ( v , 0 ) + \rho F& \text{ in } \Omega_T,\\
\rho D_t e + ({\rm{div}} v ) \sigma  ={\rm{div}} q_\theta + \tilde{e}_D& \text{ in } \Omega_T,\\
D_t C + ( {\rm{div}} v ) C =  {\rm{div}} q_C  & \text{ in } \Omega_T.
\end{cases}
\end{equation}
Here $D_t f = \partial_t f + ( v , \nabla ) f $, $( v , \nabla ) f = v_1 \partial_1 f + v_2 \partial_2 f + v_3 \partial_3 f$, ${\rm{div}} v = \nabla \cdot v$, ${\rm{grad}} f = \nabla f$, $\partial_t = \partial/{\partial t}$, $\partial_i = \partial/{\partial x_i}$, $\nabla = { }^t ( \partial_1 , \partial_2 , \partial_3 )$, 
\begin{equation*}
\Omega_T = \left\{ ( x , t) \in \mathbb{R}^4; { \ } ( x , t ) \in \bigcup_{0 < t <T} \{ \Omega (t) \times \{ t \} \} \right\},
\end{equation*} 
\begin{align*}
S (v , \sigma ) &= e_1' (|D_+ (v)|^2) D_+ (v) + e_2' ( |{\rm{div}} v |^2) ({\rm{div}} v) I_{3}  + e_3' (| D_- (v) |^2) D_- (v) - \sigma I_{3},\\
\tilde{e}_D & = e_1' (|D_+ (v)|^2) |D_+ (v)|^2 + e_2' ( |{\rm{div}} v |^2) |{\rm{div}} v|^2 + e_3' (| D_- (v) |^2) |D_- (v)|^2,\\
q_\theta &= e'_4 ( | {\rm{grad}} \theta |^2 ) {\rm{grad}} \theta ,\\
q_C &= e'_5 ( | {\rm{grad}} C |^2 ) {\rm{grad}} C,
\end{align*}
where $I_3 (=I_{3 \times 3})$ is the $3 \times 3$ identity matrix, $e_1, \cdots , e_5$ are $C^1$-functions, $e'_j = e'_j(r) = \frac{ d e_j }{d r} (r)$, $D_+ (v) = \{ (\nabla v ) + { }^t ( \nabla v ) \}/2$, $D_- ( v ) = \{ ( \nabla v) - { }^t ( \nabla v ) \}/2$, $| D_+ (v)|^2 = D_+ ( v) : D_+ (v)$, $ | D_- ( v) |^2 = D_- ( v ) : D_- ( v )$, and $|{\rm{grad}} f |^2 = \nabla f \cdot \nabla f$. We call $D_t$ the \emph{material derivative}, $\tilde{e}_D$ the \emph{density for the energy dissipation due to the viscosities}, $S (v , \sigma)$ the \emph{stress tensor}, $q_\theta$ the \emph{heat flux}, $q_C$ the \emph{general flux}, $D_+ ( v )$ the \emph{strain rate tensor}, and $D_- (v)$ the \emph{vorticity tensor}. Remark that
\begin{equation*}
2 | D_- ( v )|^2 = |{\rm{curl}} v |^2, \text{ where } {\rm{curl}} v = \nabla \times v .
\end{equation*}
In general, we consider the case when $e_3 \equiv 0$ to study the motion of the fluid. In this paper we deal with the case when $e_3 \neq 0$. Remark also that it is easy to check that
\begin{equation*}
{\rm{div}} S ( v , 0 )  = ( \mu_1 + \mu_3 ) \Delta v + (\mu_1 + \mu_2 - \mu_3 ) {\rm{grad}} ({\rm{div}} v )
\end{equation*}
if $e_j (r)= 2 \mu_j r$ for some $\mu_j \in \mathbb{R}$, where $\Delta = \partial_1^2 + \partial_2^2 + \partial_3^2$.

The second one is to study our compressible non-Newtonian fluid system \eqref{eq11}. More precisely, we investigate the conservative forms,  enthalpy, entropy, free energy, and conservation laws of the system \eqref{eq11}. In fact, we can write the system \eqref{eq11} as follows:
\begin{equation}\label{eq12}
\begin{cases}
\partial_t \rho + {\rm{div}} ( \rho v ) =0& \text{ in } \Omega_T,\\
\partial_t (\rho v ) + {\rm{div}} (\rho v \otimes v - S( v , \sigma ) ) = \rho F& \text{ in } \Omega_T,\\
\partial_t e_A + {\rm{div}} ( e_A v - q_\theta - S (v , \sigma ) v ) = \rho F \cdot v& \text{ in } \Omega_T,\\
\partial_t C + {\rm{div}} (C v - q_C) = 0& \text{ in } \Omega_T ,
\end{cases}
\end{equation}
where $e_A$ is the total energy define by $e_A = \rho |v|^2/2 + \rho e$. Under some assumptions, the enthalpy $h = h (x,t)$, the entropy $s = s( x,t)$, and the free energy $e_F = e_F (x,t)$ satisfy
\begin{equation}\label{eq13}
\begin{cases}
\partial_t ( \rho h ) + {\rm{div}} ( \rho h v - q_\theta ) = \tilde{e}_D + D_t \sigma & \text{ in } \Omega_T,\\
\partial_t ( \rho s ) + {\rm{div}} \left( \rho s v - \frac{q_\theta }{\theta} \right) = \frac{\tilde{e}_D}{\theta} + \frac{q_\theta \cdot {\rm{grad}}\theta}{\theta^2} & \text{ in } \Omega_T,
\end{cases}
\end{equation}
and
\begin{equation*}
\rho D_t e_F + s \rho D_t \theta - S (v , \sigma ) : ( D_+ (v) + D_- (v)) = - \tilde{e} _D . 
\end{equation*}
Moreover, the system \eqref{eq11} satisfies the following conservation of angular momentum:
\begin{equation*}
\int_{\Omega (t_2)} x \times \rho v { \ } d x = \int_{\Omega ( t_1 )} x \times \rho v { \ } d x + \int_{t_1}^{t_2} \int_{\Omega (\tau )} x \times \rho F { \ } d x d \tau .
\end{equation*}
See Theorem \ref{thm28} and Section \ref{sect6} for details.

The third one is to make an abstract model of incompressible non-Newtonian fluid flow from an energetic point of view. More precisely, we apply an energetic variational approach to derive the following generalized \emph{incompressible non-Newtonian fluid system}:
\begin{equation*}
\begin{cases}
D_t \rho = 0 & \text{ in } \Omega_T,\\
{\rm{div}} v = 0 & \text{ in } \Omega_T,\\
\rho D_t v + {\rm{grad}} \sigma = {\rm{div}} \{  e_1' ( | D_+ (v) |^2 ) D_+ (v) + e_3' ( | D_- ( v ) |^2 ) D_- ( v ) \} + \rho F & \text{ in } \Omega_T,\\
\rho D_t \theta =  {\rm{div}} \{ e'_4 ( | {\rm{grad}} \theta |^2 ) {\rm{grad}} \theta \} & \text{ in } \Omega_T,\\
D_t C = {\rm{div}} \{ e'_5 (|{\rm{grad}} C |^2) {\rm{grad}} C \}  & \text{ in } \Omega_T .
\end{cases}
\end{equation*}
Note that we often call our incompressible fluid system the Ostwald-de Waele model (\cite{Ost25}, \cite{Wae23}) when $e_3 \equiv 0$ and $e_1 (r) = \mu_1 r^p$ for some $\mu_1 > 0$ and $0 < p ( \neq 1) <  \infty$. Note also that we call our incompressible fluid system the incompressible Newtonian fluid system when $e_3 \equiv 0$ and $e_1 (r) = \mu_1 r$ for some $\mu_1 > 0$. Our incompressible fluid system includes several previous models of non-Newtonian fluid flow. See Bird-Armstrong-Hassager \cite{BAH87} for several non-Newtonian fluid models.

In this paper we apply our energetic variational approaches to make mathematical models of non-Newtonian fluid flow. Serrin \cite{Ser59} derived the incompressible and compressible fluid systems from mathematical and thermodynamical points of view. Bird \cite{B60} and Johnson \cite{J60} used their variational approaches to study non-Newtonian fluid. Hyon-Kwak-Liu \cite{HKL10} applied their energetic variational approaches, based on Strutt \cite{Str73} and Onsager (\cite{Ons31a}, \cite{Ons31b}), to study complex fluids. They combined the two systems derived from the least action principle and the maximum/minimum dissipation principle to derive hydrodynamic systems of complex fluids. This method is called an \emph{energetic variational approach}. Koba-Liu-Giga \cite{KLG17} and Koba \cite{K17} improved energetic variational approaches to derive incompressible and compressible fluid systems on an evolving surface, respectively. The ideas of this paper are based on Serrin \cite{Ser59}, Hyon-Kwak-Liu \cite{HKL10}, Koba-Giga-Liu \cite{KLG17}, Koba \cite{K17}, and this paper generalizes their energetic variational approaches to apply non-Newtonian fluid.

Let us explain our strategy for deriving our non-Newtonian fluid systems. We first set the following energy densities for non-Newtonian fluid.
\begin{assumption}[Energy densities for non-Newtonian fluid]\label{ass11}
\begin{multline*}
e_K  = \frac{1}{2} \rho | v |^2, { \ }e_D = \frac{1}{2} e_1 (| D_+ (v) |^2 ) + \frac{1}{2} e_2 (| {\rm{div}} v |^2 )  + \frac{1}{2} e_3 (| D_- (v)|^2),\\
e_{W}  = ({\rm{div}} v ) \sigma + \rho F \cdot v,{ \ }e_{TD}  = \frac{1}{2} e_4 (|{\rm{grad}} \theta |^2),{ \ }e_{GD} = \frac{1}{2} e_5 (|{\rm{grad}} C|^2).
\end{multline*}
Here $e_1,e_2,e_3,e_4,e_5$ are $C^1$-functions.
\end{assumption}
\noindent We call $e_K$ the \emph{kinetic energy}, $e_D$ the \emph{energy density for the energy dissipation due to the viscosities}, $e_{W}$ the \emph{power density for the work done by the pressure and exterior force}, $e_{TD}$ the \emph{energy density for the energy dissipation due to thermal diffusion}, and $e_{GD}$ the \emph{energy density for the energy dissipation due to general diffusion}. Note that $e_D \neq \tilde{e}_D$ in general, where $\tilde{e}_D$ is the density for the energy dissipation due to the viscosities. See subsection \ref{subsec61} for details.

Secondly, we consider a mathematical validity of our energy densities for non-Newtonian fluid. We apply a flow map and the Riemannian metric induced by the flow map to study the invariance of our energy densities (see Section \ref{sect3}). 

Thirdly, we derive forces from a variations of some energies based on our energy densities (see Sections \ref{sect4} and \ref{sect5}). 

Finally, we make use of several forces and our energetic variational approaches to make a mathematical model of non-Newtonian fluid flow (see Section \ref{sect6}).

This paper provides mathematical models of non-Newtonian fluid flow. We refer the reader to Serrin \cite{Ser59} for mathematical derivations of fluid systems and Gurtin-Fried-Anand \cite{GFA10} for physical laws such as the laws of thermodynamics.

\section{Main Results}\label{sect2}

In this section, we first state some conventions of this paper. Then we introduce a flow map in a domain and the Riemannian metric induced by the flow map, which are two key tools of our energetic variational approaches. Finally, we state the main results of this paper.

Let us explain the conventions used in this paper. We use the italic characters $i,j,k,\ell,i' j',k', \ell'$ as $1,2,3$. Moreover, we often use the following Einstein summation convention: $a_{i j}b_j = \sum_{j=1}^3 a_{i j} b_j$, $a_{i j} b^{i j} = \sum_{i, j=1}^3 a_{i j}b^{i j}$.

Let $\mathcal{X}$ be a set. The symbol $M_{p \times q} ( \mathcal{X}) $ denotes the set of all $p \times q$ matrices whose component belonging to $\mathcal{X}$, that is, $M \in M_{p \times q} ( \mathcal{X}) $ if and only if $[M]_{\mathfrak{i} \mathfrak{j}} \in \mathcal{X}$ $(\mathfrak{i}=1,2,\ldots, p, { \ }\mathfrak{j}=1,2,\ldots, q)$, where $[M]_{\mathfrak{i} \mathfrak{j}}$ denotes the $(\mathfrak{i},\mathfrak{j})$-th component of the matrix $M$. For $M_1, M_2 \in M_{p \times q} (\mathbb{R})$, $M_1 : M_2 := \sum_{\mathfrak{i}=1}^p \sum_{\mathfrak{j} =1}^q [M_1]_{\mathfrak{i} \mathfrak{j}} [M_2]_{\mathfrak{i} \mathfrak{j}}$. In particular, we set $|M_1|^2 = M_1: M_1$. Remark that we can write $M = ([M]_{\mathfrak{i} \mathfrak{j}})_{p \times q}$ and $I_3 = I_{3 \times 3} = ( \delta_{i j})_{3 \times 3} $, where $\delta_{i j }$ is the Kronecker delta.

Let $\Omega (t) ( = \{ \Omega (t) \}_{0 \leq t < T})$ be a $C^2$-bounded domain in $\mathbb{R}^3$ depending on time $t \in [0, T)$ for some $T \in ( 0 , \infty ]$. The symbol $C_0^\infty ( \Omega (t))$ denotes the set of all smooth functions whose support are in $\Omega (t)$. Throughout this paper we assume that $\rho = \rho (x,t)$, $v = v ( x ,t )= { }^t (v_1, v_2, v_3)$, $\sigma = \sigma (x,t)$, $e = e (x,t)$, $\theta = \theta (x,t)$, $F = F (x, t) = { }^t ( F_1 , F_2, F_3 )$, $C = C ( x ,t)$, $h = h (x,t)$, $s = s (x , t)$ , $e_F = e_F (x,t) $ are smooth functions, where a smooth function means a $C^1$ or $C^2$-function in this paper.

We introduce two key tools of this paper. We say that $U (t) \subset \Omega ( t )$ is \emph{flowed by the velocity field} $V = V (x,t) = { }^t ( V_1 (x,t) , V_2 (x,t) , V_3 (x,t) )$ if there exists a smooth function $\tilde{x} = \tilde{x} (\xi , t ) = { }^t (\tilde{x}_1 (\xi, t), \tilde{x}_2 (\xi ,t), \tilde{x}_3 (\xi ,t ) ) $ such that for every $\xi \in \Omega (0)$,
\begin{equation*}
\begin{cases}
\frac{d \tilde{x}}{d t} (\xi , t) = V (\tilde{x} (\xi , t), t ), { \ \ \ }t \in (0,T),\\
\tilde{x}(\xi , 0) = \xi ,
\end{cases}
\end{equation*}
and $U ( t )$ is expressed by
\begin{equation*}
U ( t )= \{ x= { }^t (x_1,x_2,x_3) \in \mathbb{R}^3;{ \ }x = \tilde{x} (\xi , t ) , { \ }\xi \in U_0, { \ } U_0 \subset \Omega (0) \}.
\end{equation*}
The mapping $\xi \mapsto \tilde{x} (\xi ,t)$ is called a \emph{flow map} in $\Omega (t)$, the mapping $t \mapsto \tilde{x} (\xi , t)$ is called an \emph{orbit} starting from $\xi$, and $V = V (x,t)$ is called the velocity determined by the flow map $\tilde{x} = \tilde{x} ( \xi , t)$. For simplicity we call $\tilde{x} ( \xi , t )$ a flow map. We assume that $\tilde{x} ( \cdot , t) : \Omega (0) \to \Omega (t)$ is bijective for each $0 < t < T$. For the flow map $\tilde{x} = \tilde{x} (\xi , t)$ in $\Omega (t)$,
\begin{equation*}
g_i =g_i (\xi,t) := \frac{\partial \tilde{x}}{\partial \xi_i} ={ }^t \left(\frac{\partial \tilde{x}_1}{\partial \xi_i},\frac{\partial \tilde{x}_2}{\partial \xi_i},\frac{\partial \tilde{x}_3}{\partial \xi_i} \right).
\end{equation*}
Set
\begin{align*}
& g_{i j} = g_{ i j} ( \xi , t) := g_i \cdot g_j = \frac{\partial \tilde{x}_\ell}{\partial \xi_i} \frac{\partial \tilde{x}_\ell}{\partial \xi_j} = \sum_{\ell =1}^3 \frac{\partial \tilde{x}_\ell}{\partial \xi_i} \frac{\partial \tilde{x}_\ell}{\partial \xi_j}  ,\\
& ( g^{i j} )_{3 \times 3} := ((g_{i j})_{3 \times 3} )^{-1}, \text{that is, } (g^{i j})_{3 \times 3}( g_{i j})_{3 \times 3} = I_{3 \times 3}, \\
& g^i := g^{i j}g_j = g^{i 1}g_1 + g^{i 2}g_2 + g^{i 3} g_{3},\\
& J := J (\xi ,t ) = \sqrt{ {\rm{det}} (g_{i j})_{3 \times 3} }.
\end{align*}
In this paper, the notation $g_i$, $g^j$, $g_{i j}$, $g^{i j}$, $\sqrt{{ \rm{det}} ( g_{i j})_{3 \times 3} }$ are collectively called the Riemannian metrics induced by the flow map $\tilde{x} = \tilde{x} ( \xi , t)$. See Section \ref{sect3} for some properties of the flow maps and Riemannian metrics. Moreover,
\begin{equation*}
\acute{g}_i := \frac{d}{d t} g_i  = \frac{\partial V}{\partial \xi_i} = { }^t \left( \frac{\partial V_1}{\partial \xi_i} , \frac{\partial V_2}{\partial \xi_i} ,  \frac{\partial V_3}{\partial \xi_i} \right), { \ } \acute{g}_{i j} := \frac{d}{d t} (g_{i j}) = \acute{g}_i \cdot g_j + g_i \cdot \acute{g}_j .
\end{equation*}

Now we state the main results of this paper. We begin by studying our energy densities for non-Newtonian fluid (see Assumption \ref{ass11}). 
\begin{theorem}[Representation of energy densities]\label{thm21}
Assume that $\Omega (t)$ is flowed by the smooth velocity fields $V = V (x,t) = { }^t ( V_1 , V_2 , V_3)$. Set
\begin{align*}
\mathcal{K}(e_{W_1}) = \mathcal{K}(e_{W_1}) (\xi , t)= & \frac{1}{2} \acute{g}_{i j} g^{i j} \sigma (\tilde{x} (\xi,t) , t ),\\
\mathcal{K} (e_{D_1}) =\mathcal{K} (e_{D_1}) ( \xi ,t ) =& \frac{1}{2} e_1 \left(  \frac{1}{4} ( \acute{g}_{i j}\acute{g}_{k \ell} g^{i k}g^{j \ell} )  \right),\\
\mathcal{K} (e_{D_2}) =\mathcal{K} (e_{D_2})  (\xi , t) = & \frac{1}{2} e_2 \left( \frac{1}{4}  (\acute{g}_{i j} \acute{g}_{k \ell} g^{i j} g^{k \ell}) \right) ,\\
\mathcal{K} (e_{D_3}) =\mathcal{K} (e_{D_3}) (\xi , t) =& \frac{1}{2} e_3 \left(  ( \acute{g}_i \cdot \acute{g}_j)( g^i \cdot g^j) - \frac{1}{4} ( \acute{g}_{i j}\acute{g}_{k \ell} g^{i k}g^{j \ell} )  \right)  ,\\
\mathcal{K}(e_{D_4}) =\mathcal{K}(e_{D_4}) ( \xi , t) =& \frac{1}{2} e_4 \left( g^{i j}  \frac{\partial \theta }{\partial \xi_i} \frac{\partial \theta }{\partial \xi_j} \right),\\
 \mathcal{K}(e_{D_5}) = \mathcal{K}(e_{D_5}) ( \xi , t ) =& \frac{1}{2} e_5 \left( g^{i j}  \frac{\partial C }{\partial \xi_i} \frac{\partial C }{\partial \xi_j} \right),
\end{align*}
where $\tilde{x} = \tilde{x} ( \xi , t) $ is a flow map in $\Omega (t)$. Then
\begin{align*}
\int_{\Omega (t)} ({\rm{div}} V ) \sigma { \ }d x= \int_{\Omega (0)}\mathcal{K}(e_{W_1})J (\xi,t){ \ }d \xi,\\
\int_{\Omega (t)} \frac{1}{2} e_1 (|D_+ (V)|^2) { \ }d x= \int_{\Omega (0)} \mathcal{K}(e_{D_1}) J (\xi,t){ \ }d \xi,\\
\int_{\Omega (t)} \frac{1}{2} e_2 ( | {\rm{div}} V |^2 ) { \ }d x= \int_{\Omega (0)} \mathcal{K}(e_{D_2}) J (\xi,t){ \ }d \xi,\\
\int_{\Omega (t)} \frac{1}{2} e_3 ( | D_- (V) |^2 ) { \ }d x= \int_{\Omega (0)} \mathcal{K}(e_{D_3}) J (\xi,t){ \ }d \xi,\\
\int_{\Omega (t)} \frac{1}{2} e_4 ( | {\rm{grad}} \theta |^2  )  { \ }d x= \int_{\Omega (0)}\mathcal{K}(e_{D_4}) J (\xi,t){ \ }d \xi,\\
\int_{\Omega (t)} \frac{1}{2} e_5 ( | {\rm{grad}} C |^2  )  { \ }d x= \int_{\Omega (0)}\mathcal{K}(e_{D_5}) J (\xi,t){ \ }d \xi.
\end{align*}
\end{theorem}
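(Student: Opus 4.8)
The plan is to reduce all six identities to a single application of the change-of-variables formula for the flow map, after which each becomes a \emph{pointwise} identity expressing an Eulerian quantity in terms of the Riemannian metrics. Write $A=A(\xi,t)$ for the Jacobian matrix of the flow map, i.e.\ the matrix whose $(\ell,i)$ entry is $\partial\tilde x_\ell/\partial\xi_i$, so that its $i$-th column is $g_i=\partial\tilde x/\partial\xi_i$. Since $\tilde x(\cdot,t)\colon\Omega(0)\to\Omega(t)$ is a bijection and $(g_{ij})=A^{\mathsf t}A$, its Jacobian determinant is $\det A=\sqrt{\det(A^{\mathsf t}A)}=\sqrt{\det(g_{ij})}=J$. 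Hence for every integrand $f=f(x,t)$,
\begin{equation*}
\int_{\Omega(t)}f(x,t)\,dx=\int_{\Omega(0)}f(\tilde x(\xi,t),t)\,J(\xi,t)\,d\xi,
\end{equation*}
and the theorem follows once I verify, at each point, that the pulled-back arguments satisfy
\begin{equation*}
\mathrm{div}\,V=\tfrac12\acute g_{ij}g^{ij},\qquad |D_+(V)|^2=\tfrac14\acute g_{ij}\acute g_{k\ell}g^{ik}g^{j\ell},\qquad |\mathrm{grad}\,f|^2=g^{ij}\tfrac{\partial f}{\partial\xi_i}\tfrac{\partial f}{\partial\xi_j},
\end{equation*}
together with the companion identities for $|\mathrm{div}\,V|^2$ and $|D_-(V)|^2$; because $e_1,\dots,e_5$ and $\sigma$ are merely composed with these arguments, matching the scalars suffices.

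Second, I would set up the chain-rule dictionary. Differentiating $\xi\mapsto\tilde x(\xi,t)$ and its inverse shows that the rows of $A^{-1}$ are the gradients $\mathrm{grad}\,\xi_i$, which is precisely $g^i=g^{ij}g_j$; in particular $g^i\cdot g^j=g^{ij}$ (these are among the metric properties recorded in Section~\ref{sect3}). Consequently, for a scalar $f$ one has $\mathrm{grad}\,f=\frac{\partial f}{\partial\xi_i}g^i$, whence $|\mathrm{grad}\,f|^2=g^{ij}\frac{\partial f}{\partial\xi_i}\frac{\partial f}{\partial\xi_j}$; taking $f=\theta$ and $f=C$ gives the last two identities directly. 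For the velocity, the defining relation $\acute g_i=\partial V/\partial\xi_i$ and the chain rule give $(\acute g_i)_k=(L)_{km}(g_i)_m$ with $L=\nabla V$ the velocity gradient $(L)_{km}=\partial V_k/\partial x_m$; substituting into $\acute g_{ij}=\acute g_i\cdot g_j+g_i\cdot\acute g_j$ yields the clean matrix identity $\acute g_{ij}=2(A^{\mathsf t}D_+(V)A)_{ij}$, since $L+L^{\mathsf t}=2D_+(V)$.

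Third, I would harvest the four remaining identities from $\acute g_{ij}=2(A^{\mathsf t}D_+(V)A)_{ij}$ and $AG^{-1}A^{\mathsf t}=I$ (where $G^{-1}=(g^{ij})$). Contracting with $g^{ij}$ and using the cyclic property of the trace gives $\acute g_{ij}g^{ij}=2\,\mathrm{tr}\,D_+(V)=2\,\mathrm{div}\,V$, hence $\mathrm{div}\,V=\tfrac12\acute g_{ij}g^{ij}$ and $|\mathrm{div}\,V|^2=\tfrac14\acute g_{ij}\acute g_{k\ell}g^{ij}g^{k\ell}$; the former also supplies $\mathcal K(e_{W_1})$ after multiplying by $\sigma$. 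Likewise $\tfrac14\acute g_{ij}\acute g_{k\ell}g^{ik}g^{j\ell}=\tfrac14\,\mathrm{tr}\,(G^{-1}\acute gG^{-1}\acute g)=\mathrm{tr}\,(D_+(V)^2)=|D_+(V)|^2$. For $D_-(V)$ the efficient route is to compute $(\acute g_i\cdot\acute g_j)(g^i\cdot g^j)=(A^{\mathsf t}L^{\mathsf t}LA)_{ij}\,g^{ij}=\mathrm{tr}\,(L^{\mathsf t}L)=|\nabla V|^2$, and then to invoke the orthogonality of the symmetric and antisymmetric parts, $|\nabla V|^2=|D_+(V)|^2+|D_-(V)|^2$, which immediately gives $|D_-(V)|^2=(\acute g_i\cdot\acute g_j)(g^i\cdot g^j)-\tfrac14\acute g_{ij}\acute g_{k\ell}g^{ik}g^{j\ell}$, exactly $\mathcal K(e_{D_3})$.

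The bulk of the work, and the main place to slip, is the index bookkeeping: keeping straight the roles of $A$, $A^{\mathsf t}$, $A^{-1}$ and of the raised and lowered indices, and repeatedly using $AG^{-1}A^{\mathsf t}=I$ and $A^{\mathsf t}A=(g_{ij})$ under the trace. The one genuinely non-mechanical step is recognizing $D_-(V)$ through the Pythagorean splitting $|\nabla V|^2=|D_+(V)|^2+|D_-(V)|^2$ rather than attempting to express the antisymmetric part directly; everything else is a careful but routine unwinding of the chain rule together with the change-of-variables formula.
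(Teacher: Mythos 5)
Your proposal is correct and follows essentially the same route as the paper's own proof: the change of variables with Jacobian $J=\sqrt{\det(g_{ij})_{3\times 3}}$, the pointwise identity $\acute g_{ij}=2\,\frac{\partial \tilde{x}_k}{\partial \xi_i}[D_+(V)]_{k\ell}\frac{\partial \tilde{x}_\ell}{\partial \xi_j}$, and the splitting $|\nabla V|^2=|D_+(V)|^2+|D_-(V)|^2$ to handle the $D_-$ term are exactly the ingredients of Lemmas \ref{lem33} and \ref{lem34}. The only difference is presentational: you organize the bookkeeping through matrix traces and cyclicity ($A$, $G^{-1}$, $L$), whereas the paper carries out the same computations in explicit index notation.
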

\noindent See Section \ref{sect3} for the representation of another energy density. Theorem \ref{thm21} gives a mathematical validity of our energy densities. More precisely, we see the invariance of our energy densities from Theorem \ref{thm21} and \cite[Section 3]{K17}. The paper \cite{K17} studied compressible fluid systems on an evolving surface from an energetic point of view.

Next we consider the viscous, pressure, and diffusion terms of our systems. In this paper we derive these terms by applying our energy densities. For smooth functions $V = V (x,t) = { }^t (V_1 , V_2 , V_3)$ and $f = f ( x ,t )$,
\begin{align*}
E_{D} [V] (t) & : = - \int_{ \Omega (t)} \frac{1}{2} \left\{ e_1 ( |D_+ ( V ) |^2 ) + e_2 (|{\rm{div}} V |^2) + e_3 (| D_- (V) |^2 ) \right\} { \ }d x,\\
E_{W} [V] (t) & := \int_{\Omega (t)} \{ ({\rm{div}} V ) \sigma + \rho F \cdot V \}{ \ } d x,\\
E_{TD} [f] (t)  & := - \int_{\Omega (t)} \frac{1}{2} e_4 (| {\rm{grad}} f |^2) { \ } d x,\\
E_{GD} [f] (t)  & := - \int_{\Omega (t)} \frac{1}{2} e_5 (| {\rm{grad}} f |^2 ) { \ } d x .
\end{align*}
Moreover, we set
\begin{equation*}
E_{D + W} [V] (t) =  E_{D} [V] (t) + E_{W}[V] (t).
\end{equation*}

Let us derive the viscous and pressure terms of our fluid systems and the diffusion terms of our heat and diffusion systems.
\begin{theorem}[Derivation of viscous and pressure terms of compressible fluid systems]\label{thm22}
Fix $t \in (0,T)$. Assume that for every $\varphi \in [ C_0^\infty ( \Omega (t)) ]^3$,
\begin{equation*}
\frac{d}{d \varepsilon} \bigg|_{\varepsilon = 0} E_{D + W} [ v + \varepsilon \varphi ] (t)= 0.
\end{equation*}
Then $( \rho, v , \sigma , F )$ fulfills
\begin{multline*}
{\rm{div}} \{ e_1' ( |D_+ ( v) |^2 ) D_+ (v) + e_2' ( | {\rm{div}} v |^2 ) ({\rm{div}} v  )I_{3 \times 3} \\
+ e_3' ( | D_- (v) |^2 ) D_- (v) \} - {\rm{grad}} \sigma + \rho F = 0.
\end{multline*}
\end{theorem}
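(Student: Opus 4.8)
The plan is to carry out a direct first-variation computation followed by the fundamental lemma of the calculus of variations. First I would substitute $V = v + \varepsilon \varphi$ into the functional $E_{D+W}$ and differentiate under the integral sign at $\varepsilon = 0$. Since $D_+$, $D_-$, and ${\rm div}$ act linearly, one has $D_\pm (v + \varepsilon \varphi) = D_\pm (v) + \varepsilon D_\pm (\varphi)$ and ${\rm div}(v + \varepsilon \varphi) = {\rm div}\, v + \varepsilon\, {\rm div}\, \varphi$, so expanding each squared norm gives $\frac{d}{d\varepsilon}\big|_{\varepsilon=0} |D_+ (v + \varepsilon \varphi)|^2 = 2 D_+(v) : D_+(\varphi)$, and analogously for the $e_2$ and $e_3$ arguments. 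Applying the chain rule to the $C^1$-functions $e_1, e_2, e_3$ and differentiating the affine term $E_W$, the factors of $\tfrac12$ cancel and I obtain
\begin{align*}
\frac{d}{d\varepsilon}\bigg|_{\varepsilon = 0} E_{D+W}[v + \varepsilon \varphi](t) = & -\int_{\Omega(t)} \Big\{ e_1'(|D_+(v)|^2) D_+(v) : D_+(\varphi) \\
& + e_2'(|{\rm div}\, v|^2)({\rm div}\, v)({\rm div}\, \varphi) + e_3'(|D_-(v)|^2) D_-(v) : D_-(\varphi) \Big\}\, dx \\
& + \int_{\Omega(t)} \{ ({\rm div}\, \varphi)\sigma + \rho F \cdot \varphi \}\, dx.
\end{align*}

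The key step—and the one I expect to be the main obstacle—is reducing the contractions against $D_\pm(\varphi)$ to contractions against the full gradient $\nabla \varphi$, which requires tracking the symmetry type of each coefficient tensor and the divergence convention for matrices. Set $A_1 := e_1'(|D_+(v)|^2) D_+(v)$ and $A_3 := e_3'(|D_-(v)|^2) D_-(v)$. Since $D_+(v)$ is symmetric, $A_1$ is symmetric, so $A_1 : {}^t(\nabla \varphi) = {}^t A_1 : \nabla \varphi = A_1 : \nabla \varphi$ and hence $A_1 : D_+(\varphi) = A_1 : \nabla \varphi$; since $D_-(v)$ is antisymmetric, $A_3$ is antisymmetric, so $A_3 : {}^t(\nabla \varphi) = -A_3 : \nabla \varphi$ and hence $A_3 : D_-(\varphi) = A_3 : \nabla \varphi$ as well. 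Writing moreover $({\rm div}\, v)({\rm div}\, \varphi) = \{ ({\rm div}\, v) I_3 \} : \nabla \varphi$ and $({\rm div}\, \varphi)\sigma = (\sigma I_3) : \nabla \varphi$, the whole first variation collapses into $-\int_{\Omega(t)} M : \nabla \varphi\, dx + \int_{\Omega(t)} \rho F \cdot \varphi\, dx$, where $M := A_1 + e_2'(|{\rm div}\, v|^2)({\rm div}\, v) I_3 + A_3 - \sigma I_3$.

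Finally I would integrate by parts. Because $\varphi \in [C_0^\infty(\Omega(t))]^3$ has compact support in $\Omega(t)$, all boundary contributions vanish, and $-\int_{\Omega(t)} M : \nabla \varphi\, dx = \int_{\Omega(t)} ({\rm div}\, M)\cdot \varphi\, dx$. Using ${\rm div}(\sigma I_3) = {\rm grad}\,\sigma$ to split off the pressure term, the hypothesis becomes
\begin{equation*}
0 = \int_{\Omega(t)} \Big[ {\rm div}\{ e_1'(|D_+(v)|^2) D_+(v) + e_2'(|{\rm div}\, v|^2)({\rm div}\, v) I_3 + e_3'(|D_-(v)|^2) D_-(v) \} - {\rm grad}\,\sigma + \rho F \Big] \cdot \varphi\, dx
\end{equation*}
for every $\varphi \in [C_0^\infty(\Omega(t))]^3$. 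Since the bracketed vector field is continuous and is $L^2$-orthogonal to all compactly supported smooth test fields, the fundamental lemma of the calculus of variations forces it to vanish identically on $\Omega(t)$, which is exactly the asserted identity. The remaining work is routine verification of the tensor identities and the differentiation-under-the-integral-sign, the only genuinely delicate point being the symmetric/antisymmetric decomposition used to pass from $D_\pm(\varphi)$ to $\nabla \varphi$.
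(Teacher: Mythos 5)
Your proposal is correct and follows essentially the same route as the paper: compute the first variation by the chain rule (Lemma \ref{lem51}), integrate by parts against the compactly supported test field, and conclude via the fundamental lemma of the calculus of variations. The only difference is presentational—you consolidate all terms into a single matrix $M$ and make explicit the symmetric/antisymmetric identities that let $D_\pm(\varphi)$ be replaced by $\nabla\varphi$, details the paper subsumes under ``using integration by parts''—so your write-up is, if anything, more complete than the paper's own proof.
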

\begin{theorem}[Derivation of viscous and pressure terms of incompressible fluid systems]\label{thm23}
Fix $t \in (0, T)$. Suppose that ${\rm{div}} v = 0$. Assume that for every $\varphi \in [C_0^\infty ( \Omega (t))]^3$ satisfying ${\rm{div}} \varphi = 0$,
\begin{equation*}
\frac{d}{d \varepsilon} \bigg|_{\varepsilon = 0} E_{D + W} [ v + \varepsilon \varphi ] (t)= 0.
\end{equation*}
Then there is a function $\sigma \in C^1 ( \Omega (t) )$ such that
\begin{equation*}
{\rm{div}} \{ e_1' ( |D_+ ( v) |^2 ) D_+ (v) + e_3' ( | D_- (v) |^2) D_- (v) \} + \rho F = {\rm{grad}} \sigma .
\end{equation*}
\end{theorem}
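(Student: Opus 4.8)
The plan is to mirror the route used for the compressible case in Theorem \ref{thm22}: compute the first variation of $E_{D+W}$ at $v$ in an admissible direction $\varphi$, integrate by parts to transfer all derivatives off $\varphi$, and read off the resulting weak identity. The essential difference is that the admissible directions are now restricted to divergence-free fields, so the variation will \emph{not} determine
\[
G := \mathrm{div}\{ e_1'(|D_+(v)|^2)D_+(v) + e_3'(|D_-(v)|^2)D_-(v) \} + \rho F
\]
pointwise; it will only assert that $G$ is $L^2$-orthogonal to every solenoidal test field. Recovering the scalar $\sigma$ from this orthogonality is the crux of the argument, and it is exactly the place where the incompressible derivation departs from the compressible one (there, the unrestricted variation of $(\mathrm{div}\,V)\sigma$ supplies the term $-\mathrm{grad}\,\sigma$ directly).

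First I would compute $\frac{d}{d\varepsilon}\big|_{\varepsilon=0} E_{D+W}[v+\varepsilon\varphi](t)$. Since $D_\pm(v+\varepsilon\varphi)=D_\pm(v)+\varepsilon D_\pm(\varphi)$ and $\frac{d}{d\varepsilon}\big|_{\varepsilon=0}|D_\pm(v+\varepsilon\varphi)|^2 = 2\,D_\pm(v):D_\pm(\varphi)$, differentiating under the integral sign gives the $E_D$-contribution as $-\int_{\Omega(t)}\{ e_1'(|D_+(v)|^2)D_+(v):D_+(\varphi) + e_3'(|D_-(v)|^2)D_-(v):D_-(\varphi) \}\,dx$. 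Two terms collapse because of incompressibility: the $e_2(|\mathrm{div}\,V|^2)$ term is constant in $\varepsilon$ (as $\mathrm{div}(v+\varepsilon\varphi)\equiv 0$) and drops out, and in $E_W$ the contribution $\frac{d}{d\varepsilon}\big|_{\varepsilon=0}\int_{\Omega(t)}(\mathrm{div}(v+\varepsilon\varphi))\sigma\,dx = \int_{\Omega(t)}(\mathrm{div}\,\varphi)\sigma\,dx = 0$ vanishes for any $\sigma$, leaving only $\int_{\Omega(t)}\rho F\cdot\varphi\,dx$.

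Next I would integrate by parts. Writing $A=e_1'(|D_+(v)|^2)D_+(v)$ (symmetric) and $B=e_3'(|D_-(v)|^2)D_-(v)$ (antisymmetric), symmetry gives $A:D_+(\varphi)=A:\nabla\varphi$ while antisymmetry gives $B:D_-(\varphi)=B:\nabla\varphi$; since $\varphi$ has compact support, $\int_{\Omega(t)} A:\nabla\varphi\,dx = -\int_{\Omega(t)}(\mathrm{div}\,A)\cdot\varphi\,dx$ and likewise for $B$. Collecting the terms, the hypothesis becomes
\[
\int_{\Omega(t)} G\cdot\varphi\,dx = 0 \qquad \text{for all } \varphi\in[C_0^\infty(\Omega(t))]^3 \text{ with } \mathrm{div}\,\varphi = 0 .
\]

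Finally, the step I expect to be the main obstacle is the extraction of the pressure. The assertion that a field $L^2$-orthogonal to every compactly supported solenoidal field is a gradient is precisely de Rham's theorem in the form standard for the Navier--Stokes pressure; it produces a distribution $\sigma$ with $G=\mathrm{grad}\,\sigma$, and since $G$ is continuous under the stated smoothness assumptions, $\nabla\sigma=G$ forces $\sigma\in C^1(\Omega(t))$. This gives exactly the claimed identity. On a simply connected domain one can argue more elementarily: testing against $\varphi=\mathrm{curl}\,\psi$ for arbitrary $\psi\in[C_0^\infty(\Omega(t))]^3$ (automatically divergence-free) and integrating by parts yields $\int_{\Omega(t)}\mathrm{curl}\,G\cdot\psi\,dx=0$, hence $\mathrm{curl}\,G=0$, and the Poincar\'e lemma supplies the potential $\sigma$. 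The delicate point is the functional-analytic/topological input, since a general bounded $C^2$ domain need not be simply connected; the clean resolution is therefore to invoke de Rham's theorem rather than the curl argument.
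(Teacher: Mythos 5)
Your proposal is correct and takes essentially the same route as the paper: the first-variation computation and the integration by parts transferring derivatives onto the test field are exactly the content of Lemma \ref{lem51}, and the pressure-extraction step you attribute to de Rham's theorem is precisely Proposition \ref{prop25} (Temam--Sohr), which the paper invokes for this purpose. The only cosmetic difference is that you additionally sketch a curl-based alternative for simply connected domains before correctly discarding it in favor of the orthogonality characterization of gradients, which is what the paper uses.
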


\begin{theorem}[Variations of dissipation energies]\label{thm24}Fix $t \in (0, T)$. Then for every $\phi \in C_0^\infty ( \Omega (t))$,
\begin{align*}
\frac{d}{d \varepsilon} \bigg|_{\varepsilon = 0} E_{TD} [ \theta + \varepsilon \phi ] (t) & = \int_{\Omega (t)} {\rm{div}} \{ e_4' ( | {\rm{grad}} \theta |^2 ) {\rm{grad} \theta} \} \phi { \ } d x,\\
\frac{d}{d \varepsilon} \bigg|_{\varepsilon = 0} E_{GD} [ C + \varepsilon \phi ] (t) & = \int_{\Omega (t)} {\rm{div}} \{ e_5' ( | {\rm{grad}} C |^2 ) {\rm{grad} C} \} \phi { \ } d x. 
\end{align*}
\end{theorem}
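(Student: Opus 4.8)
The plan is to treat both identities by the same direct variational calculation, since the second is obtained from the first simply by replacing $(e_4, \theta)$ with $(e_5, C)$; so I would prove the $E_{TD}$ statement in full and remark that the $E_{GD}$ statement follows verbatim. Fix $t \in (0,T)$ and $\phi \in C_0^\infty(\Omega(t))$. First I would substitute $\theta + \varepsilon \phi$ into the definition of $E_{TD}$ and expand the integrand's argument, using
\begin{equation*}
|{\rm{grad}}(\theta + \varepsilon \phi)|^2 = |{\rm{grad}}\, \theta|^2 + 2 \varepsilon\, {\rm{grad}}\, \theta \cdot {\rm{grad}}\, \phi + \varepsilon^2 |{\rm{grad}}\, \phi|^2 .
\end{equation*}
Since $\theta$ and $\phi$ are smooth and $\Omega(t)$ is a fixed bounded $C^2$-domain (the time $t$ is frozen), the integrand depends smoothly on $\varepsilon$ with derivatives controlled uniformly on the compact support of $\phi$, so differentiation under the integral sign is justified.

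Next I would apply the chain rule. Because $e_4$ is a $C^1$-function with $e_4'(r) = \tfrac{d e_4}{d r}(r)$, differentiating the composite at $\varepsilon = 0$ gives
\begin{equation*}
\frac{d}{d \varepsilon}\bigg|_{\varepsilon = 0} e_4(|{\rm{grad}}(\theta + \varepsilon \phi)|^2) = e_4'(|{\rm{grad}}\, \theta|^2)\, \big( 2\, {\rm{grad}}\, \theta \cdot {\rm{grad}}\, \phi \big),
\end{equation*}
using that only the linear-in-$\varepsilon$ term survives at $\varepsilon = 0$. The factor $\tfrac{1}{2}$ in $E_{TD}$ cancels the factor $2$, so that
\begin{equation*}
\frac{d}{d \varepsilon}\bigg|_{\varepsilon = 0} E_{TD}[\theta + \varepsilon \phi](t) = - \int_{\Omega(t)} e_4'(|{\rm{grad}}\, \theta|^2)\, {\rm{grad}}\, \theta \cdot {\rm{grad}}\, \phi \; d x .
\end{equation*}

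Finally I would integrate by parts. Setting $w := e_4'(|{\rm{grad}}\, \theta|^2)\, {\rm{grad}}\, \theta$, which is a $C^1$ vector field on $\Omega(t)$, the identity $w \cdot {\rm{grad}}\, \phi = {\rm{div}}(\phi w) - ({\rm{div}}\, w)\phi$ together with the divergence theorem transfers the derivative from $\phi$ onto $w$; the boundary contribution $\int_{\partial \Omega(t)} \phi\, (w \cdot n)\, dS$ vanishes identically because $\phi \in C_0^\infty(\Omega(t))$ has compact support strictly inside $\Omega(t)$. This yields exactly
\begin{equation*}
\frac{d}{d \varepsilon}\bigg|_{\varepsilon = 0} E_{TD}[\theta + \varepsilon \phi](t) = \int_{\Omega(t)} {\rm{div}}\{ e_4'(|{\rm{grad}}\, \theta|^2)\, {\rm{grad}}\, \theta \}\, \phi \; d x,
\end{equation*}
as claimed, and the $E_{GD}$ identity follows by the identical argument. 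I do not expect a genuine obstacle here: the only points requiring care are the justification of differentiating under the integral and the vanishing of the boundary term, both of which are immediate from the smoothness hypotheses and the compact support of $\phi$; the computation is otherwise a routine first variation.
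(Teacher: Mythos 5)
Your proof is correct and takes essentially the same route as the paper's own argument (Lemma \ref{lem51}): compute the first variation by the chain rule, obtaining $-\int_{\Omega(t)} e_4'(|{\rm{grad}}\,\theta|^2)\,{\rm{grad}}\,\theta\cdot{\rm{grad}}\,\phi\,dx$, and then integrate by parts, the boundary term vanishing because $\phi$ has compact support in $\Omega(t)$. You merely spell out two points the paper leaves implicit (the justification for differentiating under the integral sign and the explicit product-rule identity behind the integration by parts), so there is nothing to change.
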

\noindent From Theorems \ref{thm22}-\ref{thm24} we obtain forces derived from a variation of energies base on our energy densities. See the Appendix $(\mathrm{I})$ for another derivation of strain rate tensors and fluxes from our energy densities.

Remark: Mathematical derivation of the pressure of compressible fluid is different form the one of the pressure of incompressible fluid. In this paper we make use of the power density $e_W$ to derive the pressure of compressible fluid. On the other hand, we obtain the pressure of incompressible fluid to apply the following proposition.
\begin{proposition}[Temam \cite{Tem77}, Sohr \cite{Soh01}]\label{prop25}
Let $\Omega$ be a bounded $C^2$-domain in $\mathbb{R}^3$. Set
\begin{equation*}
G_2 ( \Omega ) = \left\{ f \in [L^2 (\Omega ) ]^3; { \ } \int_{ \Omega } f (x) \cdot \varphi (x) { \ } d x = 0 \text{ for } \varphi \in [ C_0^\infty ( \Omega )]^3 \text{ with } {\rm{div}} \varphi =0 \right\} .
\end{equation*}
Then $f \in G_2 ( \Omega)$ if and only if there is $\mathfrak{p} \in W^{1,2} ( \Omega )$ such that $f = \nabla \mathfrak{p}$. Moreover, $f$ is continuous, then $\mathfrak{p}$ is $C^1$-function.
\end{proposition}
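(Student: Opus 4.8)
The plan is to prove the two implications separately, reducing the nontrivial direction to the orthogonal Helmholtz--Weyl decomposition of $[L^2(\Omega)]^3$. Write $\mathcal{V} := \{ \varphi \in [C_0^\infty(\Omega)]^3 : {\rm{div}}\,\varphi = 0 \}$, let $H_\sigma$ denote its closure in $[L^2(\Omega)]^3$, and set $G := \{ \nabla \psi : \psi \in W^{1,2}(\Omega) \}$. Since orthogonality to $\mathcal{V}$ coincides with orthogonality to its closed linear span, $G_2(\Omega) = H_\sigma^{\perp}$ holds by definition. The easy implication is immediate: for $f = \nabla \mathfrak{p}$ with $\mathfrak{p} \in W^{1,2}(\Omega)$ and any $\varphi \in \mathcal{V}$, integration by parts gives $\int_\Omega \nabla \mathfrak{p} \cdot \varphi \, d x = - \int_\Omega \mathfrak{p}\, {\rm{div}}\,\varphi \, d x = 0$, the boundary term vanishing by compact support; hence $G \subseteq H_\sigma^{\perp} = G_2(\Omega)$. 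Moreover, by the Poincar\'e--Wirtinger inequality on the bounded domain $\Omega$ (passing to mean-zero representatives), $G$ is a closed subspace of $[L^2(\Omega)]^3$.

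For the converse, fix $f \in G_2(\Omega) = H_\sigma^{\perp}$. First I would construct the potential by solving a weak Neumann problem: the Poincar\'e--Wirtinger inequality makes the bilinear form $(\psi_1, \psi_2) \mapsto \int_\Omega \nabla \psi_1 \cdot \nabla \psi_2 \, d x$ coercive on the quotient $W^{1,2}(\Omega)/\mathbb{R}$, so by the Lax--Milgram theorem there is a unique mean-zero $\mathfrak{p} \in W^{1,2}(\Omega)$ with $\int_\Omega \nabla \mathfrak{p} \cdot \nabla \psi \, d x = \int_\Omega f \cdot \nabla \psi \, d x$ for all $\psi \in W^{1,2}(\Omega)$. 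In other words $\nabla \mathfrak{p}$ is the orthogonal projection of $f$ onto $G$, so that $g := f - \nabla \mathfrak{p}$ satisfies $\int_\Omega g \cdot \nabla \psi \, d x = 0$ for every $\psi \in W^{1,2}(\Omega)$, i.e. $g \in G^{\perp}$. Since $\nabla \mathfrak{p} \in G \subseteq H_\sigma^{\perp}$ and $f \in H_\sigma^{\perp}$, we also have $g \in H_\sigma^{\perp}$.

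The decisive step is the identification $G^{\perp} = H_\sigma$. Testing the relation $\int_\Omega g \cdot \nabla \psi \, d x = 0$ against $\psi \in C_0^\infty(\Omega)$ shows ${\rm{div}}\,g = 0$ in the distributional sense, while allowing general $\psi \in W^{1,2}(\Omega)$ encodes the vanishing of the normal trace $g \cdot n$ on $\partial \Omega$. The classical fact I would invoke here---and the place where the $C^2$-regularity of $\partial \Omega$ is genuinely used---is that every square-integrable solenoidal field with vanishing normal trace lies in the $L^2$-closure of $\mathcal{V}$, that is, $G^{\perp} = \{ u \in [L^2(\Omega)]^3 : {\rm{div}}\,u = 0,\ u \cdot n|_{\partial\Omega} = 0 \} = H_\sigma$. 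This is the main obstacle; it rests on a de Rham--type argument (a distribution annihilating all divergence-free test fields is a gradient) together with a density result for solenoidal fields near the boundary, exactly as developed in Temam \cite{Tem77} and Sohr \cite{Soh01}. Granting it, $g \in G^{\perp} = H_\sigma$ and $g \in H_\sigma^{\perp}$ force $g \in H_\sigma \cap H_\sigma^{\perp} = \{0\}$, whence $f = \nabla \mathfrak{p}$ with $\mathfrak{p} \in W^{1,2}(\Omega)$.

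It remains to treat the regularity addendum. If $f$ is continuous, then $\nabla \mathfrak{p} = f$ is continuous; since $\Omega$ is connected, recovering $\mathfrak{p}$ by integrating its gradient along paths shows that $\mathfrak{p}$ coincides almost everywhere with a function whose first-order partial derivatives are continuous, so $\mathfrak{p} \in C^1(\Omega)$. The only nonelementary ingredient in the whole argument is the boundary density result underlying $G^{\perp} = H_\sigma$; everything else is Lax--Milgram, Poincar\'e--Wirtinger, and orthogonality bookkeeping.
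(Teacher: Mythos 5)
The paper itself does not prove Proposition 2.5: it is stated as a quotation from Temam \cite{Tem77} and Sohr \cite{Soh01}, with only the remark that the $C^1$-regularity of $\mathfrak{p}$ can be obtained from elliptic regularity theory. Your proposal is correct, and it is a genuine proof skeleton rather than a bare citation, so the useful comparison is with the proof in the cited sources. Your easy inclusion, the closedness of $G$ via Poincar\'e--Wirtinger, and the Lax--Milgram/projection step are all sound, and your direct treatment of the regularity addendum (mollify and integrate the continuous gradient along segments) is arguably cleaner than the paper's appeal to elliptic regularity, since continuity of $f$ alone is not a hypothesis under which elliptic theory naturally yields $C^1$ solutions. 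The one thing to be clear-eyed about is that your ``decisive step'' $G^{\perp} = H_{\sigma}$ is not a lighter ingredient than the proposition itself: in Temam's book the proposition is proved by a different route, namely de Rham's theorem (a distribution annihilating all divergence-free test fields is a gradient) combined with the lemma that a distribution whose gradient lies in $L^2$ on a bounded Lipschitz domain lies in $W^{1,2}$, while the identification of $H_{\sigma}$ with the solenoidal $L^2$-fields of vanishing normal trace is a separate theorem resting on a boundary density argument. So what you have produced is a correct reduction of one classical theorem to an equally deep classical theorem from the same references, plus Hilbert-space bookkeeping; you have exchanged the black box rather than removed it. Given that the paper's own treatment is a pure citation, this is a perfectly reasonable level of detail, but a self-contained proof would still owe the reader the density result you invoke, exactly as you acknowledge.
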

\noindent Note that we can obtain the regularity of the function $\mathfrak{p}$ from the elliptic regularity theory. Note also that this paper does not characterize the pressure of compressible non-Newtonian fluid. See Appendix $(\mathrm{II})$ for the derivation of the compressible barotropic fluid system.

Now we consider a variation of the action integral determined by the kinetic energy with respect to the flow maps. To this end, we introduce a variation $\tilde{x}^\varepsilon (\xi,t)$ of a flow map $\tilde{x}(\xi,t)$ and the velocity $v^\varepsilon$ determined by the flow map $\tilde{x}^\varepsilon$. Let $\tilde{x} (\xi, t)$ be a flow map in $\Omega (t)$, and let $v$ be the velocity determined by the flow map $\tilde{x} (\xi ,t)$ in $\Omega ( t )$, i.e. for $\xi \in \Omega (0)$ and $0<t<T$,
\begin{equation*}
\begin{cases}
v = v (x,t)= { }^t ( v_1 (x,t) , v_2 ( x,t) , v_3 (x,t) ),\\ 
\tilde{x} = \tilde{x} ( \xi ,t ) = { }^t ( \tilde{x}_1 (\xi,t ) , \tilde{x}_2 (\xi,t ) , \tilde{x}_3 (\xi,t ) ),\\
\frac{d \tilde{x}}{d t} (\xi , t) = v ( \tilde{x} (\xi , t ) , t ),\\
\tilde{x} (\xi , 0) = \xi .
\end{cases}
\end{equation*}
Write
\begin{align*}
\Omega (t) & := \{ x = { }^t (x_1 , x_2 , x_3 ) \in \mathbb{R}^3; { \ } x = \tilde{x} (\xi , t ) , { \ }\xi  \in \Omega (0) \} ,\\
\Omega_T & := \bigg\{ (x,t) \in \mathbb{R}^4;{ \ } (x,t ) \in \bigcup_{0<t <T} \{ \Omega (t) \times \{ t \} \} \bigg\}.
\end{align*}
\noindent For $-1 < \varepsilon <1$, let $\Omega^\varepsilon (t) ( = \{ \Omega^\varepsilon (t) \}_{0 \leq t < T})$ be a domain in $\mathbb{R}^3$ depending on time $t \in [0, T )$. We say that $\Omega^\varepsilon (t)$ is a variation of $\Omega (t)$ if $\Omega^\varepsilon (0) = \Omega (0)$ and $\Omega^{\varepsilon}(t)|_{\varepsilon = 0} = \Omega (t) $. Set
\begin{equation*}
\Omega_T^\varepsilon := \bigg\{ (x,t) \in \mathbb{R}^4;{ \ } (x,t ) \in \bigcup_{0<t <T} \{ \Omega^\varepsilon (t) \times \{ t \} \} \bigg\}.
\end{equation*}
Let $\tilde{x}^\varepsilon (\xi,t)$ be a flow map in $\Omega^\varepsilon (t)$, and $v^\varepsilon$ be the velocity determined by the flow map $\tilde{x}^\varepsilon$, i.e. for $\xi \in \Omega (0)$ and $0<t<T$,
\begin{equation*}
\begin{cases}
v^\varepsilon = v^\varepsilon (x,t)= { }^t ( v^\varepsilon_1 (x,t) , v^\varepsilon_2 ( x,t) , v^\varepsilon_3 (x,t) ),\\ 
\tilde{x}^\varepsilon = \tilde{x}^\varepsilon ( \xi ,t ) = { }^t ( \tilde{x}^\varepsilon_1 (\xi,t ) , \tilde{x}^\varepsilon_2 (\xi,t ) , \tilde{x}^\varepsilon_3 (\xi,t ) ),\\
\frac{d \tilde{x}^\varepsilon}{d t} (\xi , t) = v^\varepsilon (\tilde{x}^\varepsilon(\xi , t ) , t ),\\
\tilde{x}^\varepsilon (\xi , 0) = \xi .
\end{cases}
\end{equation*}
We say that $( \tilde{x}^\varepsilon (\xi ,t) , \Omega_T^\varepsilon)$ is a variation of $( \tilde{x} (\xi ,t ) , \Omega_T)$ if $\tilde{x}^\varepsilon (\xi , t)$ is smooth as a function of $( \varepsilon , \xi , t ) \in (-1 ,1) \times \Omega (0) \times [0,T)$ and $\tilde{x}^\varepsilon ( \xi ,t ) |_{ \varepsilon = 0} = \tilde{x} ( \xi , t)$. Assume that $\Omega^\varepsilon (t)$ is expressed by
\begin{equation*}
\Omega^\varepsilon (t) = \{ x = { }^t (x_1 , x_2 , x_3 ) \in \mathbb{R}^3; { \ } x = \tilde{x}^\varepsilon (\xi , t ) , { \ }\xi  \in \Omega (0) \} .
\end{equation*}

Now we consider the densities of the fluid in the domain $\Omega (t)$ and a variation of $\Omega (t)$. Applying the Reynold transport theorem, we have
\begin{proposition}[Continuity equation in domains]\label{prop26}{ \ }\\
$(\mathrm{i})$ Assume that for each $0 < t <T$ and every $U ( t ) \subset \Omega (t)$ flowed by the velocity $v$,
\begin{equation*}
\frac{d}{d t} \int_{U (t)} \rho ( x , t) { \ }d x = 0.
\end{equation*}
Then
\begin{equation*}
\partial_t \rho + ( v , \nabla ) \rho + ({\rm{div}} v ) \rho = 0 \text{ in } \Omega_T. 
\end{equation*}
$(\mathrm{ii})$ Assume that for each $0< t <T$ and every $U^\varepsilon ( t ) \subset \Omega^\varepsilon (t)$ flowed by the velocity $v^\varepsilon$,
\begin{equation*}
\frac{d}{d t} \int_{U^\varepsilon (t)} \rho^\varepsilon ( x , t ) { \ }d x = 0.
\end{equation*}
Then
\begin{equation*}
\partial_t \rho^\varepsilon + (v^\varepsilon , \nabla ) \rho^\varepsilon + ({\rm{div}} v^\varepsilon ) \rho^\varepsilon = 0 \text{ in } \Omega_T^\varepsilon.  
\end{equation*}
\end{proposition}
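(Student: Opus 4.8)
The plan is to apply the Reynolds transport theorem to the density $\rho$ on a subregion flowed by $v$, and then to localize. Fix $0 < t < T$ and let $U(t) \subset \Omega(t)$ be flowed by the velocity $v$. The transport theorem yields
\begin{equation*}
\frac{d}{d t} \int_{U (t)} \rho ( x , t) { \ } d x = \int_{U (t)} \{ \partial_t \rho + ( v , \nabla ) \rho + ( {\rm{div}} v ) \rho \} { \ } d x ,
\end{equation*}
the integrand being equivalently $\partial_t \rho + {\rm{div}} ( \rho v )$. By the hypothesis of part $(\mathrm{i})$ the left-hand side vanishes, so, writing $g := \partial_t \rho + ( v , \nabla ) \rho + ( {\rm{div}} v ) \rho$, we have $\int_{U (t)} g { \ } d x = 0$ for \emph{every} subregion $U(t)$ flowed by $v$.

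It remains to pass from this family of integral identities to the pointwise statement $g \equiv 0$. Fix $( x_0 , t_0 ) \in \Omega_T$. Since $\tilde{x} ( \cdot , t_0 ) : \Omega (0) \to \Omega (t_0)$ is bijective, there is a unique $\xi_0 \in \Omega (0)$ with $\tilde{x} ( \xi_0 , t_0 ) = x_0$. For small $r > 0$ set $U_0 = \{ \xi ; { \ } | \xi - \xi_0 | < r \} \subset \Omega (0)$ and let $U(t)$ be the region obtained by flowing $U_0$ forward along $v$; then $U(t_0)$ is a neighborhood of $x_0$ that shrinks to $\{ x_0 \}$ as $r \to 0$, by continuity of the flow map. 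Since $\int_{U (t_0)} g { \ } d x = 0$ for all such $r$ and $g$ is continuous, dividing by $| U (t_0) |$ and letting $r \to 0$ gives $g ( x_0 , t_0 ) = 0$. As $( x_0 , t_0 )$ was arbitrary, $g \equiv 0$ in $\Omega_T$, which is precisely the claimed continuity equation.

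Part $(\mathrm{ii})$ is identical after replacing $( \rho , v , \Omega (t) , \tilde{x} )$ by $( \rho^\varepsilon , v^\varepsilon , \Omega^\varepsilon (t) , \tilde{x}^\varepsilon )$; the only inputs used are that $U^\varepsilon (t) \subset \Omega^\varepsilon (t)$ is flowed by $v^\varepsilon$ and that the fields are smooth, both granted by assumption. The main obstacle I expect is the localization step: one must verify that the flowed subregions are rich enough to isolate an arbitrary point $( x_0 , t_0 )$, which is exactly where the bijectivity and smoothness of the flow map $\tilde{x} ( \cdot , t)$ enter. Once this is secured, the transport identity itself follows by differentiating under the integral sign after the change of variables $x = \tilde{x} ( \xi , t)$ together with the Jacobian relation $\frac{d}{d t} J = ( {\rm{div}} v ) J$, and the conclusion is the standard fundamental-lemma argument.
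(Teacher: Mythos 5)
Your proof is correct and follows essentially the same route as the paper: the transport identity obtained by the change of variables $x = \tilde{x}(\xi,t)$ together with $\frac{d}{dt}J = (\mathrm{div}\, v)J$, followed by localization over arbitrary flowed subregions $U(t)$. The only difference is that you spell out the localization step (flowing small balls around $\xi_0 = \tilde{x}(\cdot,t_0)^{-1}(x_0)$ and using continuity of the integrand), which the paper dispatches with the remark that $U^\varepsilon(t)$ can be chosen arbitrarily.
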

\noindent To study a variation of the action integral determined by the kinetic energy, we give the proof of the assertion $(\mathrm{ii})$ of Proposition \ref{prop26} in Section \ref{sect3}.

Let us consider a variation of our action integral based on the kinetic energy $e_K$. Let $\rho_0 \in C^1 (\Omega (0))$. Assume that $\rho^\varepsilon$ and $\rho$ satisfy
\begin{equation*}
\begin{cases}
\partial_t \rho + ( v , \nabla ) \rho + ({\rm{div}} v ) \rho = 0 & \text{ in } \Omega_T,\\
\rho |_{t=0} = \rho_0 & \text{ in } \Omega (0),
\end{cases}
\end{equation*}
\begin{equation*}
\begin{cases}
\partial_t \rho^\varepsilon + ( v^\varepsilon , \nabla ) \rho^\varepsilon + ({\rm{div}} v^\varepsilon ) \rho^\varepsilon = 0 & \text{ in } \Omega^\varepsilon_T,\\
\rho^\varepsilon |_{t=0} = \rho_0 & \text{ in } \Omega (0).
\end{cases}
\end{equation*}
For each flow map $\tilde{x}^\varepsilon ( \xi , t )$, we set the \emph{action integral} as follows:
\begin{equation*}
A [ \tilde{x}^\varepsilon ] = - \int_0^T \int_{\Omega^\varepsilon (t)} \left\{ \frac{1}{2} \rho^\varepsilon | v^\varepsilon |^2 \right\} ( x , t) { \ } d x  d t .
\end{equation*}

\begin{theorem}[Variation of the flow map to Action Integral]\label{thm27}{ \ }\\
Suppose that $( \tilde{x}^\varepsilon (\xi ,t) , \Omega_T^\varepsilon)$ is a variation of $( \tilde{x} (\xi ,t ) , \Omega_T)$. Assume that for every $\xi \in \Gamma_0$ and $0 \leq t < T$,
\begin{equation*}
\rho^\varepsilon ( \tilde{x}^\varepsilon ( \xi , t) , t )|_{\varepsilon =0} = \rho ( \tilde{x}( \xi , t) , t ).
\end{equation*}
Then
\begin{equation*}
\frac{d}{d \varepsilon } \bigg|_{\varepsilon = 0} A [\tilde{x}^\varepsilon ] = \int_0^T \int_{\Omega ( t  )} \{ \rho D_t v \} (x,t) \cdot z (x,t) { \ } d x d t,
\end{equation*}
where $z = z (x,t)$ is a variation such that ${d}/{ d \varepsilon} |_{\varepsilon =0} \tilde{x}^\varepsilon ( \xi , t ) = \tilde{y} ( \xi , t)$ and \\$z ( \tilde{x} (\xi , t) , t) = \tilde{y} (\xi , t)$.
\end{theorem}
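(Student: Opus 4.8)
The plan is to pass to Lagrangian (material) coordinates, where the decisive simplification is that conservation of mass renders both the density and the volume element independent of $\varepsilon$, so that the entire $\varepsilon$-dependence is carried by the flow map alone.

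First I would rewrite the action over the fixed reference domain $\Omega(0)$. Changing variables by $x=\tilde{x}^\varepsilon(\xi,t)$ gives $dx=J^\varepsilon(\xi,t)\,d\xi$ with $J^\varepsilon=\sqrt{{\rm det}(g^\varepsilon_{ij})_{3\times3}}$. Since $\tilde{x}^\varepsilon(\xi,0)=\xi$ we have $J^\varepsilon(\xi,0)=1$, and combining the Jacobian identity $\frac{d}{dt}J^\varepsilon=({\rm div}\,v^\varepsilon)J^\varepsilon$ (see Section \ref{sect3}) with the continuity equation for $\rho^\varepsilon$ from Proposition \ref{prop26} yields
\[
\frac{d}{dt}\{\rho^\varepsilon(\tilde{x}^\varepsilon(\xi,t),t)\,J^\varepsilon(\xi,t)\}=0,\qquad\text{hence}\qquad \rho^\varepsilon(\tilde{x}^\varepsilon(\xi,t),t)\,J^\varepsilon(\xi,t)=\rho_0(\xi).
\]
Moreover $v^\varepsilon(\tilde{x}^\varepsilon(\xi,t),t)=\partial_t\tilde{x}^\varepsilon(\xi,t)$ by the definition of the flow map. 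Substituting these two facts collapses the action to
\[
A[\tilde{x}^\varepsilon]=-\int_0^T\!\!\int_{\Omega(0)}\frac{1}{2}\rho_0(\xi)\,|\partial_t\tilde{x}^\varepsilon(\xi,t)|^2\,d\xi\,dt,
\]
in which neither the domain $\Omega(0)$ nor the weight $\rho_0$ depends on $\varepsilon$.

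Next I would differentiate under the integral sign, which is justified by the assumed smoothness of $(\varepsilon,\xi,t)\mapsto\tilde{x}^\varepsilon$. Writing $\tilde{y}=\frac{\partial}{\partial\varepsilon}\big|_{\varepsilon=0}\tilde{x}^\varepsilon$ and interchanging $\partial_\varepsilon$ with $\partial_t$,
\[
\frac{d}{d\varepsilon}\bigg|_{\varepsilon=0}A[\tilde{x}^\varepsilon]=-\int_0^T\!\!\int_{\Omega(0)}\rho_0(\xi)\,\partial_t\tilde{x}(\xi,t)\cdot\partial_t\tilde{y}(\xi,t)\,d\xi\,dt.
\]
An integration by parts in $t$ moves the time derivative off $\tilde{y}$. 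The endpoint term at $t=0$ vanishes because $\tilde{x}^\varepsilon(\xi,0)=\xi$ for all $\varepsilon$ forces $\tilde{y}(\xi,0)=0$, while the term at $t=T$ vanishes under the usual fixed-endpoint (compact-support-in-time) convention of the least action principle. This leaves
\[
\frac{d}{d\varepsilon}\bigg|_{\varepsilon=0}A[\tilde{x}^\varepsilon]=\int_0^T\!\!\int_{\Omega(0)}\rho_0(\xi)\,\partial_t^2\tilde{x}(\xi,t)\cdot\tilde{y}(\xi,t)\,d\xi\,dt.
\]

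Finally I would identify the Lagrangian acceleration with the material derivative and return to Eulerian coordinates. Differentiating $\partial_t\tilde{x}(\xi,t)=v(\tilde{x}(\xi,t),t)$ once more in $t$ and applying the chain rule gives $\partial_t^2\tilde{x}(\xi,t)=(\partial_t v+(v,\nabla)v)(\tilde{x}(\xi,t),t)=(D_t v)(\tilde{x}(\xi,t),t)$. Changing variables back by $x=\tilde{x}(\xi,t)$, using $\rho_0(\xi)\,d\xi=\rho(x,t)\,dx$ (again from the mass-conservation identity) together with $\tilde{y}(\xi,t)=z(\tilde{x}(\xi,t),t)=z(x,t)$, produces exactly the claimed formula. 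I expect the main obstacle to be the conceptual heart of the argument, namely recognizing and establishing the Lagrangian mass-conservation identity $\rho^\varepsilon J^\varepsilon=\rho_0$: this is what decouples the $\varepsilon$-derivative from both the density and the Jacobian and reduces the problem to the elementary variation of $\int\frac12\rho_0|\partial_t\tilde{x}^\varepsilon|^2$. The remaining steps are routine once this reduction is in place, the only point requiring care being the confirmation that the temporal endpoint terms genuinely vanish under the variation conventions in force.
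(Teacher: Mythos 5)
Your proposal is correct and follows essentially the same route as the paper: Lemma \ref{lem43} is exactly your Lagrangian mass-conservation reduction $\rho^\varepsilon J^\varepsilon=\rho_0$, after which the paper likewise differentiates under the integral, uses $\tilde{y}(\xi,0)=0$ to integrate by parts in time, and changes variables back via $\rho_0(\xi)\,d\xi=\rho\,dx$. If anything, you are slightly more careful than the paper in explicitly noting that the endpoint term at $t=T$ must also vanish (by a fixed-endpoint convention), a point the paper's proof passes over silently.
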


Applying Theorems \ref{thm21}-\ref{thm24} \ref{thm27}, Propositions \ref{prop25}, \ref{prop26}, an energetic variational approach, and the first law of thermodynamic, we can derive our non-Newtonian fluid systems. See Sections \ref{sect6} for details. See also Section \ref{sect6} for the enthalpy, entropy, free energy, and conservative form of the system \eqref{eq11}. 

Finally, we state conservation laws of our compressible fluid system \eqref{eq11}.
\begin{theorem}[Conservation laws]\label{thm28}{ \ }\\
Suppose that $\Omega (t)$ is flowed by the velocity $v$. Then the two assertions hold:\\
$(\mathrm{i})$ Assume that for each $0 < t < T$,
\begin{align*}
e'_4 (|{\rm{grad}} \theta |^2 ) (n , \nabla ) \theta |_{\partial \Omega (t)} & = 0,\\
e'_5 (|{\rm{grad}} C |^2 ) (n , \nabla ) C |_{\partial \Omega (t)} & = 0,
\end{align*}
and
\begin{multline*}
[e'_1 (| D_+ ( v) |^2) D_+(v) + e'_2 (| {\rm{div}} v |^2) ({\rm{div}} v ) I_{3 \times 3} \\
+ e'_3 (| D_- (v)|^2 ) D_-(v) - \sigma I_{3 \times 3} ] n |_{\partial \Omega (t)} = { }^t (0 , 0 , 0 ).
\end{multline*}
Then the system \eqref{eq11} satisfies that for $0 < t_1 < t_2 < T$,
\begin{align*}
\int_{\Omega (t_2)} \rho v { \ } d x & = \int_{\Omega (t_1)} \rho v { \ } d x + \int_{t_1}^{t_2} \int_{\Omega ( \tau )} \rho F { \ }d x d \tau ,\\
\int_{\Omega (t_2)} e_A { \ } d x & = \int_{\Omega (t_1)} e_A { \ } d x + \int_{t_1}^{t_2} \int_{ \Omega ( \tau )} \rho F \cdot v { \ }d x d \tau,\\
\int_{\Omega (t_2)} C { \ } d x & = \int_{\Omega (t_1)} C { \ } d x,
\end{align*}
where $e_A = \rho | v |^2 /2 + \rho e$. Here $n = n(x,t) = { }^t ( n_1 ,n_2 , n_3 )$ is the unit outer normal vector at $x \in \partial \Omega (t)$.\\
$(\mathrm{ii})$ Assume that $e_3 ( r ) = \mu_3 r$ for some $\mu_3 \in \mathbb{R}$. Suppose that
\begin{multline*}
[(e'_1 (| D_+ ( v) |^2) + \mu_3 )D_+(v) + (e'_2 (| {\rm{div}} v |^2) - \mu_3) ({\rm{div}} v ) I_{3} - \sigma I_{3} ] n |_{\partial \Omega (t)} = { }^t (0 , 0 , 0 ).
\end{multline*}
Then the system \eqref{eq11} satisfies that for $0 < t_1 < t_2 < T$,
\begin{equation*}
\int_{\Omega (t_2)} x \times \rho v { \ } d x = \int_{\Omega (t_1)} x \times \rho v { \ } d x + \int_{t_1}^{t_2} \int_{\Omega ( \tau )} x \times \rho F { \ }d x d \tau .
\end{equation*}
\end{theorem}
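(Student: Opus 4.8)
The plan is to read all four balance laws off the conservative form \eqref{eq12}, combined with the Reynolds transport theorem, the divergence theorem, and the stated boundary data to kill every surface term. Since $\Omega (t)$ is flowed by $v$, for any smooth $f$ the Reynolds transport theorem reads $\frac{d}{dt}\int_{\Omega (t)} f \, dx = \int_{\Omega (t)} \partial_t f \, dx + \int_{\partial \Omega (t)} f ( v \cdot n ) \, dS$, and combining this with the continuity equation (the first line of \eqref{eq11}) gives the mass form $\frac{d}{dt}\int_{\Omega (t)} \rho \psi \, dx = \int_{\Omega (t)} \rho D_t \psi \, dx$ for any smooth scalar or vector $\psi$. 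These two transport identities are the only analytic input; the rest is algebra on the fluxes.

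For the first three equalities in part $(\mathrm{i})$ I would apply the transport theorem to $f = \rho v_i$, to $f = e_A$, and to $f = C$ in turn, substitute the corresponding line of \eqref{eq12} for $\partial_t f$, and convert the volume divergences by the divergence theorem. In each case the advective surface contribution cancels against the boundary term $\int_{\partial \Omega (t)} f ( v \cdot n ) \, dS$ produced by Reynolds, leaving only the surface integrals of the genuine fluxes. For momentum this residue is $\int_{\partial \Omega (t)} S ( v , \sigma ) n \, dS$; for total energy it is $\int_{\partial \Omega (t)} q_\theta \cdot n \, dS + \int_{\partial \Omega (t)} v \cdot ( S ( v , \sigma ) n ) \, dS$ (after transposing the matrix-vector contraction in the work flux); for concentration it is $\int_{\partial \Omega (t)} q_C \cdot n \, dS$. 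All of these vanish by the hypotheses $S ( v , \sigma ) n |_{\partial \Omega (t)} = {}^t ( 0,0,0)$, $e_4' ( n , \nabla ) \theta |_{\partial \Omega (t)} = 0$, and $e_5' ( n , \nabla ) C |_{\partial \Omega (t)} = 0$. Integrating each resulting pointwise-in-$t$ identity over $[ t_1 , t_2 ]$ yields the three stated conservation laws.

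For part $(\mathrm{ii})$ I would take $\psi = x \times v$ in the mass transport identity. Using $D_t x = v$ one gets $D_t ( x \times v ) = v \times v + x \times D_t v = x \times D_t v$, so that $\frac{d}{dt}\int_{\Omega (t)} x \times \rho v \, dx = \int_{\Omega (t)} x \times ( \rho D_t v ) \, dx = \int_{\Omega (t)} x \times {\rm{div}} S ( v , \sigma ) \, dx + \int_{\Omega (t)} x \times \rho F \, dx$, where I used the momentum equation of \eqref{eq11} in the form $\rho D_t v = {\rm{div}} S ( v , \sigma ) + \rho F$. Integrating the first integral by parts produces, besides the surface term $\int_{\partial \Omega (t)} x \times ( S ( v , \sigma ) n ) \, dS$, a volume term $-\int_{\Omega (t)} \epsilon_{ijk} [ S ( v , \sigma ) ]_{kj} \, dx$ (with $\epsilon_{ijk}$ the Levi-Civita symbol), i.e.\ the axial vector of the antisymmetric part of the stress, which is proportional to $\mu_3 \, {\rm{curl}} \, v$ coming from $e_3' = \mu_3$. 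This vorticity volume term is the obstruction to conservation and is the main difficulty.

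The hard part, and the reason part $(\mathrm{ii})$ assumes $e_3 ( r ) = \mu_3 r$, is removing this volume term. The resolution is the identity ${\rm{div}} ( \mu_3 D_- (v) ) = {\rm{div}} ( \mu_3 D_+ (v) - \mu_3 ( {\rm{div}} v ) I_3 )$, which follows from ${\rm{div}} D_\pm (v) = \frac{1}{2} ( \Delta v \pm {\rm{grad}} \, {\rm{div}} \, v )$ and holds precisely because $e_3'$ is the constant $\mu_3$. Hence ${\rm{div}} S ( v , \sigma ) = {\rm{div}} \tilde S$ with the \emph{symmetric} tensor $\tilde S = ( e_1' + \mu_3 ) D_+ (v) + ( e_2' - \mu_3 ) ( {\rm{div}} v ) I_3 - \sigma I_3$, which is exactly the tensor in the boundary hypothesis of part $(\mathrm{ii})$. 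Replacing $S ( v , \sigma )$ by $\tilde S$ before integrating by parts, the axial-vector volume term now vanishes because $\tilde S$ is symmetric (so $\epsilon_{ijk} \tilde S_{kj} = 0$), while the surface term $\int_{\partial \Omega (t)} x \times ( \tilde S n ) \, dS$ vanishes by the assumed condition $\tilde S n |_{\partial \Omega (t)} = {}^t (0,0,0)$. This leaves $\frac{d}{dt}\int_{\Omega (t)} x \times \rho v \, dx = \int_{\Omega (t)} x \times \rho F \, dx$, and integrating over $[ t_1 , t_2 ]$ gives the conservation of angular momentum.
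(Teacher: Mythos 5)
Your proposal is correct and follows essentially the same route as the paper's own proof: part $(\mathrm{i})$ is the same transport-theorem/divergence-theorem/boundary-condition computation (the paper works from the non-conservative system \eqref{eq11} via the identity $\frac{d}{dt}\int_{\Omega(t)}\rho\psi\,dx=\int_{\Omega(t)}\rho D_t\psi\,dx$ rather than from the conservative form \eqref{eq12}, but these are the same manipulations, since \eqref{eq12} is obtained from \eqref{eq11} exactly this way). Part $(\mathrm{ii})$ matches the paper's argument step for step, including the crucial identity ${\rm{div}}\{\mu_3 D_-(v)\}={\rm{div}}\{\mu_3 D_+(v)-\mu_3({\rm{div}}\,v)I_3\}$ and the replacement of $S(v,\sigma)$ by the symmetric tensor $(e_1'+\mu_3)D_+(v)+(e_2'-\mu_3)({\rm{div}}\,v)I_3-\sigma I_3$ before integrating by parts, so that both the antisymmetric volume term and the surface term vanish.
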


The outline of this paper as follows: In Section \ref{sect3}, we study the representation of our energy densities to prove Theorem \ref{thm21}. In Section \ref{sect4} we make use of a flow map and the Riemannian metric to prove Theorem \ref{thm27}. In Section \ref{sect5} we use the integration by parts to prove Theorems \ref{thm23} and \ref{thm24}. In Section \ref{sect6} we apply our energetic variational approaches to derive the generalized compressible and incompressible non-Newtonian fluid systems. Moreover, we study conservation laws of the compressible fluid system \eqref{eq11} to prove Theorem \ref{thm28}. In the Appendix $( \mathrm{I} )$ we give a method to derive strain rate tensors and fluxes from our energy densities. In the Appendix $(\mathrm{II})$ we state energetic variational approaches for the inviscid compressible and incompressible fluid systems.

\section{Representation of Energy Densities}\label{sect3}

In this section we discuss a mathematical validity of our energy densities for non-Newtonian fluid. To this end, we first introduce a flow map in a domain and the Riemannian metric induced by the flow map. Secondly, we derive fundamental properties of the flow map and Riemannian metric. Finally, we study the representation of our energy densities to prove Theorem \ref{thm21}.

\begin{definition}[Flow map in a domain]\label{def31}
Let $\Omega (t)$ be a domain in $\mathbb{R}^3$ depending on time $t \in [0,T)$ for some $T \in (0, \infty ]$, and $\tilde{x} = { }^t ( \tilde{x}_1, \tilde{x}_2 , \tilde{x}_3) \in [ C^3 ( \mathbb{R}^4) ]^3$. We call $\tilde{x} = \tilde{x} (\xi , t)$ a \emph{flow map} in $\Omega ( t )$ if the three properties hold:\\
$( \mathrm{i} )$ for every $\xi \in \Omega ( 0 )$
\begin{equation*}
\tilde{x} ( \xi , 0 ) = \xi,
\end{equation*}
$( \mathrm{ii})$ for all $\xi \in \Omega (0)$ and $0 \leq t < T$
\begin{equation*}
\tilde{x} ( \xi , t) \in \Omega ( t ),
\end{equation*}
$(\mathrm{iii})$ for each $0 \leq t < T$
\begin{equation*}
\tilde{x} ( \cdot , t): \Omega (0) \to \Omega (t) \text{ is bijective}.
\end{equation*}
\end{definition}

\begin{definition}[Velocity determined by a flow map]\label{def32}
Let $\Omega (t)$ be a domain in $\mathbb{R}^3$ depending on time $t \in [0,T)$ for some $T \in (0, \infty ]$, and let $\tilde{x} = \tilde{x} (\xi ,t)$ be a flow map in $\Omega ( t )$. Suppose that there is a smooth function $v = v (x , t) = { }^t (v_1 , v_2 , v_3 )$ such that for $\xi \in \Omega (0)$ and $0<t <T$,
\begin{equation*}
\frac{d \tilde{x}}{d t} = \tilde{x}_t ( \xi , t ) = v ( \tilde{x} ( \xi ,t ) ,t).
\end{equation*}
We call the vector-valued function $v$ the \emph{velocity} determined by the flow map $\tilde{x} (\xi , t)$.
\end{definition}

Let us now study fundamental properties of a flow map in a domain and the velocity determined by the flow map. Let $\Omega (t)$ be a bounded $C^2$-domain in $\mathbb{R}^3$ depending on time $t \in [0,T)$ for some $T \in (0, \infty ]$. Let $\tilde{x} = \tilde{x} (\xi ,t)$ be a flow map in $\Omega ( t )$, and let $v = v (x,t)$ be the velocity determined by the flow map $x$, i.e. for every $\xi \in \Omega (0)$ and $0<t <T$,
\begin{equation*}
\begin{cases}
\frac{d \tilde{x}}{d t} (\xi, t) = v (\tilde{x} ( \xi, t ) , t),\\
\tilde{x} (\xi, 0) = \xi .
\end{cases}
\end{equation*}
We assume that $v$ is the velocity of the fluid in the domain $\Omega (t)$. By the bijection of the flow map, we see that $\Omega (t)$ is expressed as follows:
\begin{equation*}
\Omega (t) = \{ x = { }^t (x_1, x_2 , x_3 ) \in \mathbb{R}^3;{ \ } x = \tilde{x} (\xi , t ) , {  \ } \xi \in \Omega (0) \}.
\end{equation*}
Using the change of variables, we see that for each smooth function $f = f (x,t)$,
\begin{equation*}
\int_{\Omega (t) } f (x , t) { \ }d x = \int_{\Omega (0)} f ( \tilde{x} ( \xi , t ) , t ) {\rm{det}} (\nabla_\xi \tilde{x}){ \ } d \xi.
\end{equation*}

Next we introduce the Riemannian metric induced by the flow map $\tilde{x} (\xi,t)$. For the flow map $\tilde{x} = \tilde{x} (\xi , t)$ in $\Omega (t)$,
\begin{equation*}
g_i =g_i (\xi,t) := \frac{\partial \tilde{x}}{\partial \xi_i} = { }^t \left(\frac{\partial \tilde{x}_1}{\partial \xi_i},\frac{\partial \tilde{x}_2}{\partial \xi_i},\frac{\partial \tilde{x}_3}{\partial \xi_i} \right).
\end{equation*}
Write
\begin{equation*}
g_{i j}=g_{i j} (\xi,t) := g_i \cdot g_j = \frac{\partial \tilde{x}_\ell }{\partial \xi_i} \frac{\partial \tilde{x}_\ell}{\partial \xi_j}= \sum_{\ell = 1}^3 \frac{\partial \tilde{x}_\ell}{\partial \xi_i} \frac{\partial \tilde{x}_\ell}{\partial \xi_j}.
\end{equation*}
Set
\begin{align*}
&( g^{i j} )_{3 \times 3} := ((g_{i j})_{3 \times 3} )^{-1}, \text{that is, } \begin{pmatrix}
g^{11} & g^{12} & g^{13}\\
g^{21} & g^{22} & g^{23}\\
g^{31} & g^{32} & g^{33}
\end{pmatrix} := \begin{pmatrix}
g_{11} & g_{12} & g_{13}\\
g_{21} & g_{22} & g_{23}\\
g_{31} & g_{32} & g_{33}
\end{pmatrix}^{-1},\\
&g^i := g^{i j}g_j = g^{i 1}g_1 + g^{i 2}g_2 + g^{i 3} g_3,\\
&\acute{g}_i := \frac{d}{d t} g_i  = \frac{\partial v}{\partial \xi_i} = { }^t \left( \frac{\partial v_1}{\partial \xi_i} , \frac{\partial v_2}{\partial \xi_i} ,  \frac{\partial v_3}{\partial \xi_i} \right).
\end{align*}
It is easy to check that $g_{j i} = g_{ i j}$, $g^{j i} = g^{i j}$, $g^{i j} = g^i \cdot g^j$, $g^i \cdot g_j = \delta_{i j}$,
\begin{align*}
&g_i = g_{i j}g^j = g_{i 1}g^1 + g_{i 2}g^2 +g_{i 3}g^3,\\
&\acute{g}_{i j} = \acute{g}_i \cdot g_j + g_i \cdot \acute{g}_j,\\
&\acute{g}_i = \frac{\partial v}{\partial \xi_i} = \frac{\partial \tilde{x}_\ell}{\partial \xi_i}\frac{\partial v}{\partial \tilde{x}_\ell},\\
& \acute{g}_i \cdot g_j  = \frac{\partial \tilde{x}_k}{\partial \xi_i}\frac{\partial v_\ell}{\partial \tilde{x}_k}\frac{\partial \tilde{x}_\ell}{\partial \xi_j},\\
& \acute{g}_{i j} = 2 \frac{\partial \tilde{x}_k}{\partial \xi_i} [D_+ (v)]_{k \ell} \frac{\partial \tilde{x}_\ell}{\partial \xi_j} ,
\end{align*}
where $\delta_{i j}$ is Kronecker's delta and $D_+ (v) = \{ (\nabla v) + { }^t (\nabla v) \}/2$. Indeed, we see at once that
\begin{equation*}
g^i \cdot g_j = (g^{i 1} g_1 + g^{i 2} g_2 + g^{i 3} g_3 ) \cdot g_j = g^{i 1 } g_{1 j} + g^{i 2} g_{2 j} + g^{i 3} g_{3 j} = \delta_{i j}
\end{equation*}
and that
\begin{align*}
\acute{g}_i \cdot g_j & = { }^t \left( \sum_{i=1}^3 \frac{\partial \tilde{x}_k}{\partial \xi_i}\frac{\partial v_1}{\partial \tilde{x}_k} , \sum_{i=1}^3 \frac{\partial \tilde{x}_k}{\partial \xi_i}\frac{\partial v_2}{\partial \tilde{x}_k}  ,  \sum_{i=1}^3 \frac{\partial \tilde{x}_k}{\partial \xi_i}\frac{\partial v_3}{\partial \tilde{x}_k}  \right) \cdot { }^t \left(\frac{\partial \tilde{x}_1}{\partial \xi_j},\frac{\partial \tilde{x}_2}{\partial \xi_j},\frac{\partial \tilde{x}_3}{\partial \xi_j} \right)\\
& = \sum_{i , j = 1}^3 \frac{\partial \tilde{x}_k}{\partial \xi_i}\frac{\partial v_\ell}{\partial \tilde{x}_k}\frac{\partial \tilde{x}_\ell}{\partial \xi_j} = \frac{\partial \tilde{x}_k}{\partial \xi_i}\frac{\partial v_\ell}{\partial \tilde{x}_k}\frac{\partial \tilde{x}_\ell}{\partial \xi_j}.
\end{align*}
Since ${ }^t (\nabla_\xi \tilde{x} ) (\nabla_\xi \tilde{x}) = (g_{i j})_{3 \times 3}$ and ${\rm{det}} ({ }^t (\nabla_\xi \tilde{x} )) = {\rm{det}} (\nabla_\xi \tilde{x})$, we find that
\begin{equation*}
{\rm{det}} (\nabla_\xi \tilde{x} ) = \sqrt{ {\rm{det}} (g_{i j})_{3 \times 3} }.
\end{equation*}
From now on we set
\begin{equation*}
J = J ( \xi , t)  = \sqrt{{\rm{det}} (g_{i j})_{3 \times 3}},
\end{equation*}
and we assume that $J >0$.

Next we study basic properties of flow maps.
\begin{lemma}[Properties of Riemannian metric induced by flow map]\label{lem33} { \ }\\
Let $f \in C^1 ( \mathbb{R}^4)$. Then
\begin{align}
\delta_{i j } & = \frac{\partial \tilde{x}_i }{\partial \xi_k} \frac{\partial \tilde{x}_j }{\partial \xi_\ell} g^{k \ell} \label{eq31},\\
\int_{\Omega (t)} \frac{\partial f}{\partial x_i} { \ }d x & = \int_{\Omega (0)} g^{k \ell} \frac{\partial \tilde{x}_i}{\partial \xi_k} \frac{\partial f}{\partial \xi_\ell} J { \ } d \xi,\label{eq32}\\
\int_{\Omega (t)} | \nabla f |^2 { \ }d x & = \int_{\Omega (0)} g^{k \ell} \frac{\partial f}{\partial \xi_k} \frac{\partial f}{\partial \xi_\ell} J { \ } d \xi , \label{eq33}\\
\frac{d }{d t } \sqrt{ {\rm{det}} (g_{i j})_{ 3 \times 3} }& = ( \acute{g}_k \cdot g^k ) \sqrt{{\rm{det}} (g_{i j})_{3 \times 3} },\label{eq34}\\
\int_{\Omega (t)} f ({\rm{div}} v ) { \ } d x & = \int_{\Omega (0)} f ( \acute{g}_j \cdot g^j) J { \ } d \xi,\label{eq35}\\
\int_{\Omega (t)} f ({\rm{div}} v ) { \ } d x & = \int_{\Omega (0)} f \frac{d J}{d t} { \ } d \xi .\label{eq36}
\end{align}
\end{lemma}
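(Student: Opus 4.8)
The plan is to establish the contraction identity \eqref{eq31} first, since \eqref{eq32}, \eqref{eq33}, and \eqref{eq35} all reduce to it via the change-of-variables formula $x = \tilde{x}(\xi,t)$, $dx = J\,d\xi$ (recall ${\rm{det}}(\nabla_\xi \tilde{x}) = J > 0$); the determinant identity \eqref{eq34} I would handle separately by Jacobi's formula, and \eqref{eq36} then follows by combining \eqref{eq34} with \eqref{eq35}.

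First I would prove \eqref{eq31}. Since $[g_k]_i = \partial \tilde{x}_i/\partial \xi_k$ and $g^\ell = g^{k\ell}g_k$ (using $g^{k\ell} = g^{\ell k}$), contracting the first factor gives $\frac{\partial \tilde{x}_i}{\partial \xi_k} g^{k\ell} = [g^\ell]_i$, so the right-hand side of \eqref{eq31} equals $\sum_\ell [g^\ell]_i [g_\ell]_j$. Because $\{g_\ell\}$ is a basis of $\mathbb{R}^3$ with dual basis $\{g^\ell\}$ (we have $g^i \cdot g_j = \delta_{ij}$), the resolution of the identity $I_3 = \sum_\ell g^\ell \otimes g_\ell$ holds, whose $(i,j)$ component is exactly $\delta_{ij}$. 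Equivalently one may observe that $(g_{ij}) = {}^t(\nabla_\xi \tilde{x})(\nabla_\xi \tilde{x})$, hence $(g^{ij}) = (\nabla_\xi \tilde{x})^{-1}\,{}^t(\nabla_\xi \tilde{x})^{-1}$, and \eqref{eq31} is merely $(\nabla_\xi \tilde{x})(\nabla_\xi \tilde{x})^{-1} = I_3$ written componentwise.

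Next I would push the integrals in \eqref{eq32} and \eqref{eq33} back to $\Omega(0)$ by change of variables. The chain rule gives $\partial f/\partial \xi_\ell = (\partial f/\partial x_m)(\partial \tilde{x}_m/\partial \xi_\ell)$; contracting this against $g^{k\ell}\,\partial \tilde{x}_i/\partial \xi_k$ and applying \eqref{eq31} collapses the metric factors into $\delta_{im}$, recovering $\partial f/\partial x_i$ and proving \eqref{eq32}. The same contraction applied to both $f$-derivatives yields $g^{k\ell}(\partial f/\partial \xi_k)(\partial f/\partial \xi_\ell) = |\nabla f|^2$ evaluated at $\tilde{x}(\xi,t)$, giving \eqref{eq33}. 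Identity \eqref{eq35} is analogous: writing ${\rm{div}}\,v = \partial v_\ell/\partial x_\ell$ and using $\acute{g}_j = \partial v/\partial \xi_j$ with $g^j = g^{jk}g_k$, the chain-rule expression $\acute{g}_j \cdot g_k = \frac{\partial \tilde{x}_p}{\partial \xi_j}\frac{\partial v_\ell}{\partial \tilde{x}_p}\frac{\partial \tilde{x}_\ell}{\partial \xi_k}$ together with \eqref{eq31} gives $\acute{g}_j \cdot g^j = g^{jk}(\acute{g}_j \cdot g_k) = {\rm{div}}\,v$ at $\tilde{x}(\xi,t)$, and change of variables produces \eqref{eq35}.

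For \eqref{eq34} I would invoke Jacobi's formula $\frac{d}{dt}{\rm{det}}(g_{ij}) = {\rm{det}}(g_{ij})\,g^{ij}\acute{g}_{ij}$ (the trace of $(g^{ij})$ against $\frac{d}{dt}(g_{ij}) = (\acute{g}_{ij})$), whence $\frac{d}{dt}\sqrt{{\rm{det}}(g_{ij})} = \tfrac12 g^{ij}\acute{g}_{ij}\sqrt{{\rm{det}}(g_{ij})}$. It then remains to check $\tfrac12 g^{ij}\acute{g}_{ij} = \acute{g}_k \cdot g^k$: substituting $\acute{g}_{ij} = \acute{g}_i \cdot g_j + g_i \cdot \acute{g}_j$ and using $g^{ij}g_j = g^i$ gives $g^{ij}\acute{g}_{ij} = \acute{g}_i \cdot g^i + g^j \cdot \acute{g}_j = 2\,\acute{g}_k \cdot g^k$. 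Finally \eqref{eq36} is immediate, since \eqref{eq34} reads $dJ/dt = (\acute{g}_j \cdot g^j)J$, so substituting into \eqref{eq35} replaces $(\acute{g}_j \cdot g^j)J$ by $dJ/dt$. I expect the only genuinely delicate step to be the index bookkeeping in \eqref{eq31}, namely getting the dual-basis contraction right, since every remaining identity reduces to it together with the change-of-variables and Jacobi formulas.
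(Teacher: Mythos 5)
Your proposal is correct and follows essentially the same route as the paper: the contraction identity \eqref{eq31} via the matrix factorization $(g_{ij}) = {}^t(\nabla_\xi \tilde{x})(\nabla_\xi\tilde{x})$, change of variables plus the inverted chain rule for \eqref{eq32}, \eqref{eq33}, \eqref{eq35}, Jacobi's formula with the trace computation $g^{ij}\acute{g}_{ij} = 2\,\acute{g}_k\cdot g^k$ for \eqref{eq34}, and combining \eqref{eq34} with \eqref{eq35} for \eqref{eq36}. The only cosmetic difference is that you collapse the metric factors by contracting against \eqref{eq31}, whereas the paper explicitly computes the inverse matrix $Q = M^{-1} = {}^tM(g^{ij})_{3\times 3}$; the content is identical.
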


\begin{proof}[Proof of Lemma \ref{lem33}]

We first prove \eqref{eq31}, \eqref{eq32}, and \eqref{eq33}. Set
\begin{equation*}
M = { }^t ( \nabla_\xi \tilde{x}) =
\begin{pmatrix}
\frac{\partial \tilde{x}_1 }{\partial \xi_1} & \frac{\partial \tilde{x}_2 }{\partial \xi_1} & \frac{\partial \tilde{x}_3 }{\partial \xi_1}\\
\frac{\partial \tilde{x}_1 }{\partial \xi_2} & \frac{\partial \tilde{x}_2 }{\partial \xi_2} & \frac{\partial \tilde{x}_3 }{\partial \xi_2}\\
\frac{\partial \tilde{x}_1 }{\partial \xi_3} & \frac{\partial \tilde{x}_2 }{\partial \xi_3} & \frac{\partial \tilde{x}_3 }{\partial \xi_3}
\end{pmatrix}.
\end{equation*}
Since $M { }^t M = (g_{i j})_{3 \times 3}$, it follows from the definition of $g^{i j}$ to have
\begin{equation*}
M { }^t M (g^{i j })_{3 \times 3} = I_{3 \times 3}.
\end{equation*}
We left-multiply the both sides of the above equality by $M^{-1}$ to see that
\begin{equation*}
{ }^t M (g^{i j })_{3 \times 3} = M^{-1}.
\end{equation*}
This gives
\begin{equation*}
{ }^t M (g^{i j })_{3 \times 3} M = I_{3 \times 3}.
\end{equation*}
Therefore we find that
\begin{equation*}
\delta_{i j} = [I_{3 \times 3}]_{i j} = [{ }^t M (g^{i j })_{3 \times 3} M]_{i j} = \frac{\partial \tilde{x}_i }{\partial \xi_k} \frac{\partial \tilde{x}_j }{\partial \xi_\ell} g^{k \ell},
\end{equation*}
which is \eqref{eq31}. Set $Q = M^{-1}$. Then $({ }^t M)^{-1} = { }^t Q$. Since $(g^{i j})_{3 \times 3} = ((g_{i j})_{3 \times 3})^{-1}$, we see that
\begin{equation*}
{ }^t Q Q = (g^{i j})_{3 \times 3}.
\end{equation*}
Therefore we find that
\begin{align*}
Q = { }^t M { }^t Q Q & = { }^t M (g^{i j})_{3 \times 3}\\
& =
\begin{pmatrix}
\frac{\partial \tilde{x}_1}{\partial \xi_k} g^{1 k} & \frac{\partial \tilde{x}_1}{\partial \xi_k} g^{2 k}  & \frac{\partial \tilde{x}_1}{\partial \xi_k} g^{3 k} \\
\frac{\partial \tilde{x}_2}{\partial \xi_k} g^{1 k} & \frac{\partial \tilde{x}_2}{\partial \xi_k} g^{2 k}  & \frac{\partial \tilde{x}_2}{\partial \xi_k} g^{3 k} \\
\frac{\partial \tilde{x}_3}{\partial \xi_k} g^{1 k} & \frac{\partial \tilde{x}_3}{\partial \xi_k} g^{2 k}  & \frac{\partial \tilde{x}_3}{\partial \xi_k} g^{3 k}
\end{pmatrix}.
\end{align*}
Since
\begin{equation*}
\begin{pmatrix}
\frac{\partial f}{\partial \xi_1}\\
\frac{\partial f}{\partial \xi_2}\\
\frac{\partial f}{\partial \xi_3}
\end{pmatrix}
=
\begin{pmatrix}
\frac{\partial \tilde{x}_1 }{\partial \xi_1} & \frac{\partial \tilde{x}_2 }{\partial \xi_1} & \frac{\partial \tilde{x}_3 }{\partial \xi_1}\\
\frac{\partial \tilde{x}_1 }{\partial \xi_2} & \frac{\partial \tilde{x}_2 }{\partial \xi_2} & \frac{\partial \tilde{x}_3 }{\partial \xi_2}\\
\frac{\partial \tilde{x}_1 }{\partial \xi_3} & \frac{\partial \tilde{x}_2 }{\partial \xi_3} & \frac{\partial \tilde{x}_3 }{\partial \xi_3}
\end{pmatrix}
\begin{pmatrix}
\frac{\partial f}{\partial \tilde{x}_1}\\
\frac{\partial f}{\partial \tilde{x}_2}\\
\frac{\partial f}{\partial \tilde{x}_3}
\end{pmatrix}= M
\begin{pmatrix}
\frac{\partial f}{\partial \tilde{x}_1}\\
\frac{\partial f}{\partial \tilde{x}_2}\\
\frac{\partial f}{\partial \tilde{x}_3}
\end{pmatrix},
\end{equation*}
we left-multiply both sides of the above equality by $Q (= M^{-1})$ to see that
\begin{align*}
\begin{pmatrix}
\frac{\partial f}{\partial \tilde{x}_1}\\
\frac{\partial f}{\partial \tilde{x}_2}\\
\frac{\partial f}{\partial \tilde{x}_3}
\end{pmatrix} = Q M \begin{pmatrix}
\frac{\partial f}{\partial \tilde{x}_1}\\
\frac{\partial f}{\partial \tilde{x}_2}\\
\frac{\partial f}{\partial \tilde{x}_3}
\end{pmatrix} &= Q\begin{pmatrix}
\frac{\partial f}{\partial \xi_1}\\
\frac{\partial f}{\partial \xi_2}\\
\frac{\partial f}{\partial \xi_3}
\end{pmatrix} \\
& =
\begin{pmatrix}
\frac{\partial \tilde{x}_1}{\partial \xi_k} g^{1 k} & \frac{\partial \tilde{x}_1}{\partial \xi_k} g^{2 k}  & \frac{\partial \tilde{x}_1}{\partial \xi_k} g^{3 k} \\
\frac{\partial \tilde{x}_2}{\partial \xi_k} g^{1 k} & \frac{\partial \tilde{x}_2}{\partial \xi_k} g^{2 k}  & \frac{\partial \tilde{x}_2}{\partial \xi_k} g^{3 k} \\
\frac{\partial \tilde{x}_3}{\partial \xi_k} g^{1 k} & \frac{\partial \tilde{x}_3}{\partial \xi_k} g^{2 k}  & \frac{\partial \tilde{x}_3}{\partial \xi_k} g^{3 k}
\end{pmatrix}
\begin{pmatrix}
\frac{\partial f}{\partial \xi_1}\\
\frac{\partial f}{\partial \xi_2}\\
\frac{\partial f}{\partial \xi_3}
\end{pmatrix}
=
\begin{pmatrix}
\frac{\partial \tilde{x}_1}{\partial \xi_k} g^{\ell k} \frac{\partial f}{\partial \xi_\ell}\\
\frac{\partial \tilde{x}_2}{\partial \xi_k} g^{\ell k} \frac{\partial f}{\partial \xi_\ell}\\
\frac{\partial \tilde{x}_3}{\partial \xi_k} g^{\ell k} \frac{\partial f}{\partial \xi_\ell}
\end{pmatrix}.
\end{align*}
Therefore we have \eqref{eq32}. Using \eqref{eq32}, we observe that
\begin{align*}
| \nabla f ( \tilde{x}( \xi , t) , t ) |^2 & = \begin{pmatrix}
\frac{\partial \tilde{x}_1}{\partial \xi_k} g^{\ell k} \frac{\partial f}{\partial \xi_\ell}\\
\frac{\partial \tilde{x}_2}{\partial \xi_k} g^{\ell k} \frac{\partial f}{\partial \xi_\ell}\\
\frac{\partial \tilde{x}_3}{\partial \xi_k} g^{\ell k} \frac{\partial f}{\partial \xi_\ell}
\end{pmatrix}
\cdot
\begin{pmatrix}
\frac{\partial \tilde{x}_1}{\partial \xi_{k'}} g^{\ell' k'} \frac{\partial f}{\partial \xi_{\ell'}}\\
\frac{\partial \tilde{x}_2}{\partial \xi_{k'}} g^{\ell' k'} \frac{\partial f}{\partial \xi_{\ell'}}\\
\frac{\partial \tilde{x}_3}{\partial \xi_{k'}} g^{\ell' k'} \frac{\partial f}{\partial \xi_{\ell'}}
\end{pmatrix}\\
& = g_{k k'} g^{\ell k} g^{\ell' k'} \frac{\partial f}{\partial \xi_{\ell}}\frac{\partial f}{\partial \xi_{\ell'}}\\
& = g^{\ell \ell'} \frac{\partial f}{\partial \xi_{\ell}}\frac{\partial f}{\partial \xi_{\ell'}}.
\end{align*}
Therefore we see \eqref{eq33}.

Next we show \eqref{eq34}, \eqref{eq35}, and \eqref{eq36}. By the definition of $g^i $, we see that
\begin{align*}
&{\rm{Tr}} \left( (g^{i j})_{3 \times 3} \frac{d (g_{i j})_{3 \times 3}}{d t} \right)\\
& = {\rm{Tr}} \begin{pmatrix}
g^{1 k} \acute{g}_k \cdot g_1 + g^{1 k} g_k \cdot \acute{g}_1 & g^{1 k} \acute{g}_k \cdot g_2 + g^{1 k} g_k \cdot \acute{g}_2 & g^{1 k} \acute{g}_k \cdot g_3 + g^{1 k} g_k \cdot \acute{g}_3\\
g^{2 k} \acute{g}_k \cdot g_1 + g^{2 k} g_k \cdot \acute{g}_1 & g^{2 k} \acute{g}_k \cdot g_2 + g^{2 k} g_k \cdot \acute{g}_2 & g^{2 k} \acute{g}_k \cdot g_3 + g^{2 k} g_k \cdot \acute{g}_3\\
g^{3 k} \acute{g}_k \cdot g_1 + g^{3 k} g_k \cdot \acute{g}_1 &  g^{3 k} \acute{g}_k \cdot g_2 + g^{3 k} g_k \cdot \acute{g}_2 & g^{3 k} \acute{g}_k \cdot g_3 + g^{3 k} g_k \cdot \acute{g}_3
\end{pmatrix}\\
& = g^{\ell k} \acute{g}_k \cdot g_\ell + g^{\ell k} g_k \cdot \acute{g}_\ell = 2 \acute{g}_k \cdot g^k.
\end{align*}
Since
\begin{equation*}
\frac{d}{d t} {\rm{det}}(g_{i j})_{3 \times 3} = {\rm{det}} (g_{ i j})_{3 \times 3} {\rm{Tr}} \left( (g^{i j})_{3 \times 3} \frac{d (g_{i j})_{3 \times 3}}{d t} \right),
\end{equation*}
we check that
\begin{align*}
\frac{d}{d t} \sqrt{{\rm{det}}(g_{i j})_{3 \times 3} } & = \frac{1}{2 \sqrt{{\rm{det}}(g_{i j})_{3 \times 3}} } \frac{d}{d t} {\rm{det}} (g_{i j})_{3 \times 3} \\
& =  \frac{1}{2 \sqrt{{\rm{det}}(g_{i j})_{3 \times 3}} } {\rm{det}} (g_{i j})_{3 \times 3} (2 \acute{g}_k \cdot g^k)\\
& = \sqrt{{\rm{det}} (g_{i j})_{3 \times 3} } (\acute{g}_k \cdot g^k) .
\end{align*}
Therefore we have \eqref{eq34}. Here we used the fact that
\begin{equation*}
\frac{d}{d t} {\rm{det}}A = ({\rm{det}}A ){\rm{Tr}} \left( A^{-1} \frac{d A}{d t} \right), \text{ where }A = A (t) \text{ is a regular matrix.} 
\end{equation*}
A direct calculation shows that
\begin{align*}
\left\{ \frac{\partial v_1}{\partial x_1} + \frac{\partial v_2}{\partial x_2} + \frac{\partial v_3}{\partial x_3} \right\} (\tilde{x} ( \xi ,t ) , t)& = g^{k \ell} \left( \frac{\partial \tilde{x}_1}{\partial \xi_k} \frac{\partial v_1}{\partial \xi_\ell} +  \frac{\partial \tilde{x}_2}{\partial \xi_k} \frac{\partial v_2}{\partial \xi_\ell}  +  \frac{\partial \tilde{x}_3}{\partial \xi_k} \frac{\partial v_3}{\partial \xi_\ell}  \right)\\
&= g^{k \ell } g_k \cdot \acute{g}_\ell\\
& = g^\ell \cdot \acute{g}_\ell .
\end{align*}
This implies that
\begin{equation*}
\int_{\Omega (t)} f ({\rm{div}} v ) { \ }d x  = \int_{\Omega (0)} f (g^\ell \cdot \acute{g}_\ell ) J { \ } d \xi  .
\end{equation*}
Therefore we have \eqref{eq35}. Combining \eqref{eq34} and \eqref{eq35}, we see \eqref{eq36}. Therefore the lemma follows.

\end{proof}

To prove Theorem \ref{thm21}, we prepare the following lemma.
\begin{lemma}\label{lem34}
\begin{align}
\int_{\Omega (t)} ({\rm{div}} v ) { \ }d x & = \int_{\Omega (0)} \frac{1}{2} \acute{g}_{i j} g^{i j} J { \ }d \xi,\label{eq37}\\
\int_{\Omega (t)} | {\rm{div}} v |^2 { \ }d x & = \int_{\Omega (0)} \frac{1}{4} \acute{g}_{i j} \acute{g}_{k \ell} g^{i j} g^{k \ell} J { \ }d \xi, \label{eq38}\\
\int_{\Omega (t)} |D_+ (v)|^2 { \ }d x & = \int_{\Omega (0)} \frac{1}{4} \acute{g}_{i j}  \acute{g}_{k \ell} g^{i k} g^{j \ell} J { \ }d \xi,\label{eq39}\\
\int_{\Omega (t)} | D_- (v) |^2 { \ }d x & = \int_{\Omega (0)} \left\{ (\acute{g}_i \cdot \acute{g}_j ) ( g^i \cdot g^j) - \frac{1}{4} \acute{g}_{i j} \acute{g}_{k \ell} g^{i k} g^{j \ell} ) \right\} J{ \ }d \xi. \label{eq310}
\end{align}
\end{lemma}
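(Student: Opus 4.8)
The plan is to pull every left-hand integral back to the reference domain $\Omega(0)$ by the change-of-variables formula $\int_{\Omega(t)} f\,dx = \int_{\Omega(0)} (f\circ\tilde x)\,J\,d\xi$ recorded above, so that each of the four assertions reduces to a \emph{pointwise} algebraic identity between the pulled-back integrand and the stated expression in the Riemannian metrics. Once this reduction is in place, each identity is a finite-dimensional index computation resting on \eqref{eq31}, on the relations $g^i = g^{ij}g_j$, $g^i\cdot g^j = g^{ij}$, and $\acute g_{ij} = \acute g_i\cdot g_j + g_i\cdot\acute g_j$, and on the identity $\acute g_{ij} = 2\frac{\partial\tilde x_k}{\partial\xi_i}[D_+(v)]_{k\ell}\frac{\partial\tilde x_\ell}{\partial\xi_j}$, all established in the preamble to Lemma \ref{lem33}.

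For \eqref{eq37} I would first verify the pointwise identity $\tfrac12\acute g_{ij}g^{ij} = \mathrm{div}\,v$. Expanding $\acute g_{ij}g^{ij} = (\acute g_i\cdot g_j + g_i\cdot\acute g_j)g^{ij}$ and using the symmetry $g^{ij}=g^{ji}$ collapses both summands to $2\,\acute g_i\cdot(g^{ij}g_j) = 2\,\acute g_i\cdot g^i$, which equals $2\,\mathrm{div}\,v$ by the relation $g^\ell\cdot\acute g_\ell = \mathrm{div}\,v$ obtained in the proof of Lemma \ref{lem33}. Identity \eqref{eq38} then follows immediately by squaring this pointwise relation before integrating, since $|\mathrm{div}\,v|^2 = (\tfrac12\acute g_{ij}g^{ij})^2 = \tfrac14\acute g_{ij}\acute g_{k\ell}g^{ij}g^{k\ell}$.

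The computational heart is \eqref{eq39}. Here I substitute $\acute g_{ij} = 2\frac{\partial\tilde x_m}{\partial\xi_i}[D_+(v)]_{mn}\frac{\partial\tilde x_n}{\partial\xi_j}$, and the analogous expression for $\acute g_{k\ell}$, into $\tfrac14\acute g_{ij}\acute g_{k\ell}g^{ik}g^{j\ell}$. The four Jacobian factors regroup into two contractions of the form $\frac{\partial\tilde x_m}{\partial\xi_i}\frac{\partial\tilde x_p}{\partial\xi_k}g^{ik}$ and $\frac{\partial\tilde x_n}{\partial\xi_j}\frac{\partial\tilde x_q}{\partial\xi_\ell}g^{j\ell}$, each of which equals a Kronecker delta by \eqref{eq31}. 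These deltas contract the two strain-rate factors into $[D_+(v)]_{mn}[D_+(v)]_{mn} = |D_+(v)|^2$, producing the desired integrand. I expect this index-collapsing step, namely keeping the four dummy pairs $(i,k)$, $(j,\ell)$, $(m,p)$, $(n,q)$ straight so that \eqref{eq31} applies cleanly, to be the main place where care is required.

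Finally, for \eqref{eq310} I would exploit the splitting $\nabla v = D_+(v) + D_-(v)$ into symmetric and antisymmetric parts, for which $D_+(v):D_-(v)=0$ and hence $|\nabla v|^2 = |D_+(v)|^2 + |D_-(v)|^2$ pointwise. It then remains to identify $|\nabla v|^2$ with $(\acute g_i\cdot\acute g_j)(g^i\cdot g^j)$: writing $\acute g_i\cdot\acute g_j = \frac{\partial v_m}{\partial\xi_i}\frac{\partial v_m}{\partial\xi_j}$ and $\frac{\partial v_m}{\partial\xi_i} = \frac{\partial v_m}{\partial x_n}\frac{\partial\tilde x_n}{\partial\xi_i}$ by the chain rule, and using $g^i\cdot g^j = g^{ij}$ together with the reduction $\frac{\partial\tilde x_n}{\partial\xi_i}\frac{\partial\tilde x_p}{\partial\xi_j}g^{ij} = \delta_{np}$ from \eqref{eq31}, yields $(\acute g_i\cdot\acute g_j)(g^i\cdot g^j) = \frac{\partial v_m}{\partial x_n}\frac{\partial v_m}{\partial x_n} = |\nabla v|^2$. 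Subtracting the integrand of \eqref{eq39} and applying the change of variables completes \eqref{eq310}.
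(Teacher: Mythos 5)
Your proposal is correct and matches the paper's own proof essentially step for step: the paper likewise reduces each identity to a pointwise computation via the change of variables, proves \eqref{eq37} by contracting $\acute g_{ij}g^{ij}=2\,\acute g_i\cdot g^i$, squares it for \eqref{eq38}, collapses the Jacobian factors with \eqref{eq31} for \eqref{eq39}, and obtains \eqref{eq310} from $|D_-(v)|^2=|\nabla v|^2-|D_+(v)|^2$ together with the identification $|\nabla v|^2=(\acute g_k\cdot\acute g_\ell)(g^k\cdot g^\ell)$ (which the paper gets by citing \eqref{eq33} componentwise, exactly the chain-rule computation you spell out).
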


\begin{proof}[Proof of Lemma \ref{lem34}]

We first prove \eqref{eq37}. Since
\begin{align*}
\acute{g}_{i j} g^{i j} & = (\acute{g}_i \cdot g_j + g_i \cdot \acute{g}_j ) (g^i \cdot g^j)\\
& = (\acute{g}_i \cdot g^i ) + ( \acute{g}_j \cdot g^j) = 2 \acute{g}_i \cdot g^i,
\end{align*}
it follows from \eqref{eq35} to have \eqref{eq37}. Using \eqref{eq37}, we check that
\begin{align*}
\acute{g}_{i j} \acute{g}_{k \ell} g^{i j} g^{k \ell } & = 4 (\acute{g}_i \cdot g^i )( \acute{g}_k \cdot g^k) = 4 |{\rm{div}} v |^2 .
\end{align*}
This is \eqref{eq38}. An easy calculation gives
\begin{align*}
\frac{1}{4} \acute{g}_{i j} \acute{g}_{k \ell} g^{i k} g^{j \ell} = & \left( \frac{\partial \tilde{x}_{i'} }{\partial \xi_{i} } [D_+ (v) ]_{i' j'}\frac{\partial \tilde{x}_{j'} }{\partial \xi_{j} } \right) \left( \frac{\partial \tilde{x}_{k'} }{\partial \xi_{k} } [D_+ (v) ]_{k' \ell'}\frac{\partial \tilde{x}_{\ell'} }{\partial \xi_{\ell} } \right)  g^{i k} g^{j \ell}\\
= & \left( \frac{\partial \tilde{x}_{i'} }{\partial \xi_{i} } \frac{\partial \tilde{x}_{k'} }{\partial \xi_{k} } g^{i k} \right) \left( \frac{\partial \tilde{x}_{j'} }{\partial \xi_{j} }  \frac{\partial \tilde{x}_{\ell'} }{\partial \xi_{\ell} } g^{j \ell} \right)  [D_+ (v) ]_{i' j'} [D_+ (v) ]_{k' \ell'}\\
= & \delta_{i' k'} \delta_{j' \ell'}  [D_+ (v) ]_{i' j'} [D_+ (v) ]_{k' \ell'} = D_+ (v): D_+ (v).
\end{align*}
Therefore we have \eqref{eq39}. Finally we prove \eqref{eq310}. Using \eqref{eq33}, we see that
\begin{align*}
| \nabla v |^2 = \{ | \nabla v_1 |^2 + | \nabla v_2 |^2 + | \nabla v_3 |^2 \} ( \tilde{x}( \xi , t ) , t )  & = g^{k \ell} \frac{\partial v_j}{\partial \xi_k} \frac{\partial v_j}{\partial \xi_\ell}\\
& = g^{k \ell} \acute{g}_k \cdot \acute{g}_\ell\\
& = (\acute{g}_k \cdot \acute{g}_{\ell} )( g^k \cdot g^\ell ) .
\end{align*}
Since
\begin{equation*}
|D_- (v)^2| = |\nabla v|^2 - | D_+ ( v )|^2,
\end{equation*}
we see that
\begin{equation*}
| D_- (v) |^2 =  (\acute{g}_k \cdot \acute{g}_{\ell} )( g^k \cdot g^\ell ) - \frac{1}{4} \acute{g}_{i j} \acute{g}_{k \ell} g^{i k} g^{ j \ell }.
\end{equation*}
Therefore the lemma follows.

\end{proof}

\begin{proof}[Proof of Theorem \ref{thm21}]
Applying Lemmas \ref{lem33} and \ref{lem34}, we prove Theorem \ref{thm21}.
\end{proof}

\section{Variation of the Flow Map to Action Integral}\label{sect4}

We study a variation of the action integral with respect to the flow maps. Let $\Omega (t)$ be a bounded $C^2$-domain in $\mathbb{R}^3$ depending on time $t \in [0,T)$ for some $T \in (0, \infty ]$. We introduce a flow map in a variation of the domain $\Omega (t)$.

\begin{definition}[Flow map in a variation of the domain $\Omega (t)$]\label{def41}{ \ }\\
For $- 1 < \varepsilon <1$, let $\Omega^\varepsilon (t)$ be a variation of $\Omega (t)$. Let $\tilde{x}^\varepsilon = { }^t ( \tilde{x}^\varepsilon_1 , \tilde{x}^\varepsilon_2 , \tilde{x}^\varepsilon_3 ) \in [ C^3 ( \mathbb{R}^4) ]^3$. We call $\tilde{x}^\varepsilon = \tilde{x}^\varepsilon ( \xi , t )$ a \emph{flow map} in $\Omega^\varepsilon ( t )$ if the three properties hold:\\
$( \mathrm{i} )$ for every $\xi \in \Omega (0) (= \Omega ( 0 ))$
\begin{equation*}
\tilde{x}^\varepsilon ( \xi , 0 ) = \xi,
\end{equation*}
$( \mathrm{ii})$ for all $\xi \in \Omega (0)$ and $0 \leq t < T$
\begin{equation*}
\tilde{x}^\varepsilon ( \xi , t) \in \Omega^\varepsilon ( t ),
\end{equation*}
$(\mathrm{iii})$ for each $0 \leq t < T$
\begin{equation*}
\tilde{x}^\varepsilon ( \cdot , t): \Omega (0) \to \Omega^\varepsilon (t) \text{ is bijective}.
\end{equation*}
\end{definition}
Note that from the property $(\mathrm{iii})$ we can write
\begin{equation*}
\Omega^\varepsilon (t) = \{  x \in \mathbb{R}^3;{ \ }x = \tilde{x}^\varepsilon (\xi , t ) , { \ }\xi \in \Omega (0) \} .
\end{equation*}

\begin{definition}[Velocity determined by a flow map in $\Omega^\varepsilon (t)$]\label{def42}{ \ }\\
For $- 1 < \varepsilon <1$, let $\Omega^\varepsilon (t)$ be a variation of $\Omega (t)$. Let $\tilde{x}^\varepsilon = \tilde{x}^\varepsilon (\xi ,t)$ be a flow map in $\Omega^\varepsilon ( t )$. Suppose that there is a smooth function $v^\varepsilon = v^\varepsilon (x , t) = { }^t (v_1^\varepsilon , v_2^\varepsilon , v_3^\varepsilon )$ such that for $\xi \in \Omega (0)$ and $0<t <T$,
\begin{equation*}
\frac{d \tilde{x}^\varepsilon}{d t} = \tilde{x}^\varepsilon_t ( \xi , t ) = v^\varepsilon ( \tilde{x}^\varepsilon ( \xi ,t ) ,t) .
\end{equation*}
We call the vector-valued function $v^\varepsilon$ the \emph{velocity} determined by the flow map $\tilde{x}^\varepsilon (\xi ,t)$.
\end{definition}

For $- 1< \varepsilon <1 $, let $\Omega^\varepsilon (t)$ be a variation of $\Omega ( t )$. Let $\tilde{x}^\varepsilon = \tilde{x}^\varepsilon (\xi ,t)$ be a flow map in $\Omega^\varepsilon ( t )$, and let $v^\varepsilon = v^\varepsilon (x,t)$ be the velocity determined by the flow map $\tilde{x}^\varepsilon$, i.e. for $\xi \in \Omega (0)$ and $0<t <T$,
\begin{equation*}
\begin{cases}
\frac{d \tilde{x}^\varepsilon}{d t} (\xi, t) = v^\varepsilon (\tilde{x}^\varepsilon ( \xi, t ) , t),\\
\tilde{x}^\varepsilon (\xi, 0) = \xi .
\end{cases}
\end{equation*}

For each flow map $\tilde{x}^\varepsilon = \tilde{x}^\varepsilon (\xi , t)$ in $\Omega^\varepsilon (t)$,
\begin{equation*}
g_i^\varepsilon := \frac{\partial g}{\partial \xi_i} = { }^t \left(\frac{\partial \tilde{x}^\varepsilon_1}{\partial \xi_i},\frac{\partial \tilde{x}^\varepsilon_2}{\partial \xi_i},\frac{\partial \tilde{x}^\varepsilon_3}{\partial \xi_i} \right).
\end{equation*}
Write
\begin{equation*}
g^\varepsilon_{i j}:= g_i^\varepsilon \cdot g_j^\varepsilon = \frac{\partial \tilde{x}^\varepsilon_\ell}{\partial \xi_i} \frac{\partial \tilde{x}^\varepsilon_\ell}{\partial \xi_j}= \sum_{\ell =1}^3 \frac{\partial \tilde{x}^\varepsilon_\ell}{\partial \xi_i} \frac{\partial \tilde{x}^\varepsilon_\ell}{\partial \xi_j}.
\end{equation*}
Set
\begin{align*}
&( g^{i j}_\varepsilon )_{3 \times 3} := ((g_{i j}^\varepsilon)_{3 \times 3} )^{-1}, \text{that is, } \begin{pmatrix}
g^{11}_\varepsilon & g^{12}_\varepsilon & g^{13}_\varepsilon \\
g^{21}_\varepsilon & g^{22}_\varepsilon & g^{23}_\varepsilon \\
g^{31}_\varepsilon & g^{32}_\varepsilon & g^{33}_\varepsilon 
\end{pmatrix} := \begin{pmatrix}
g_{11}^\varepsilon & g_{12}^\varepsilon & g_{13}^\varepsilon \\
g_{21}^\varepsilon & g_{22}^\varepsilon & g_{23}^\varepsilon \\
g_{31}^\varepsilon & g_{32}^\varepsilon & g_{33}^\varepsilon 
\end{pmatrix}^{-1},\\
&g^i_\varepsilon := g_\varepsilon^{i j}g_j^\varepsilon = g_\varepsilon^{i 1}g_1^\varepsilon + g_\varepsilon^{i 2}g_2^\varepsilon + g_\varepsilon^{i 3}g_3^\varepsilon,\\
&\acute{g}^\varepsilon_i := \frac{d}{d t} g^\varepsilon_i  = \frac{\partial v^\varepsilon}{\partial \xi_i} = { }^t \left( \frac{\partial v^\varepsilon_1}{\partial \xi_i} , \frac{\partial v^\varepsilon_2}{\partial \xi_i} ,  \frac{\partial v^\varepsilon_3}{\partial \xi_i} \right).
\end{align*}
Changing the variables, we see that for $f = f (x,t)$
\begin{equation*}
\int_{\Omega^\varepsilon (t)} f ( x , t) { \ }d x = \int_{\Omega (0)} f ( \tilde{x}^\varepsilon (\xi , t ), t) \sqrt{{\rm{det}} (g_{i j}^\varepsilon )_{3 \times 3} } { \ }d \xi .
\end{equation*}
Set
\begin{equation*}
J^\varepsilon = J^\varepsilon ( \xi , t) = \sqrt{ {\rm{det}} (g^{\varepsilon}_{i j})_{3 \times 3}}.
\end{equation*}
We assume that $J^\varepsilon >0$. By the argument similar to derive \eqref{eq36}, we see that
\begin{equation}\label{eq41}
\int_{\Omega^\varepsilon (t)} f ({\rm{div}} v^\varepsilon ) { \ }d x = \int_{\Omega (0)} f \frac{d J^\varepsilon}{d t} { \ } d \xi .
\end{equation}

Let us first prove Proposition \ref{prop26}.

\begin{proof}[Proof of Proposition \ref{prop26}]
We only prove $(\mathrm{ii})$. Fix $U^\varepsilon (t) \subset \Omega^\varepsilon (t)$. By the bijective of the flow map $\tilde{x}^\varepsilon = \tilde{x}^\varepsilon ( \xi , t )$, there is $U_0 \subset \Omega (0)$ such that
\begin{equation*}
U^\varepsilon (t) = \{ x \in \mathbb{R}^3; { \ } x = x^\varepsilon (\xi , t), { \ }\xi \in U_0 \}.
\end{equation*} 
Since
\begin{align*}
\int_{U^\varepsilon (t)} \rho^\varepsilon ( x , t) { \ }d x = \int_{U_0} \rho^\varepsilon ( \tilde{x}^\varepsilon (\xi, t ) , t) J^\varepsilon  { \ } d \xi,
\end{align*}
we use \eqref{eq41} to see that
\begin{align*}
\frac{d }{d t} \int_{U^\varepsilon (t)} \rho^\varepsilon ( x , t) { \ }d x & = \int_{U_0} \left( \frac{d}{d t} \{ \rho^\varepsilon ( \tilde{x}^\varepsilon (\xi, t ) , t) \} J^\varepsilon +  \rho^\varepsilon ( \tilde{x} (\xi, t ) , t) \frac{d J^\varepsilon }{d t}   \right) { \ } d \xi\\
& = \int_{U^\varepsilon (t)} \{ \partial_t \rho^\varepsilon + (v^\varepsilon , \nabla ) \rho^\varepsilon + ({\rm{div}} v^\varepsilon ) \rho^\varepsilon \} { \ } d x .
\end{align*}
Since we can choose $U^\varepsilon (t) \subset \Omega^\varepsilon (t)$ arbitrarily, we conclude that
\begin{equation*}
\partial_t \rho^\varepsilon + (v^\varepsilon , \nabla ) \rho^\varepsilon + ({\rm{div}} v^\varepsilon ) \rho^\varepsilon = 0.
\end{equation*}
Therefore Proposition \ref{prop26} is proved.
\end{proof}

\begin{lemma}[Representation of energies]\label{lem43}{ \ }\\
Let $\rho_0 = \rho_0 ( x )$ be a smooth function. Then the following two assertions hold:\\
$(\mathrm{i})$ Assume that 
\begin{equation*}
\begin{cases}
\partial_t \rho + ( v , \nabla ) \rho + ({\rm{div}} v ) \rho = 0 & \text{ in } \Omega_T,\\
\rho |_{t = 0} = \rho_0 & \text{ in }\Omega_0.
\end{cases}
\end{equation*}
Then
\begin{align}
\int_{\Omega (t)} \frac{1}{2} \rho | v |^2 { \ }d x & = \int_{\Omega (0)} \frac{1}{2} \rho_0 ( \xi ) \tilde{x}_t \cdot \tilde{x}_t { \ } d \xi,\label{eq42}\\
\int_{\Omega (t)} \rho e { \ }d x & = \int_{\Omega (0)} \rho_0 ( \xi ) e { \ } d \xi,\label{eq43}\\
\int_{\Omega (t) } \rho F \cdot v { \ } d x & = \int_{\Omega (0)} \rho_0 (\xi ) F \cdot \tilde{x}_t { \ } d \xi \label{eq44}.
\end{align}
$(\mathrm{ii})$ Assume that
\begin{equation*}
\begin{cases}
\rho_t^\varepsilon + ( v^\varepsilon , \nabla ) \rho^\varepsilon + ({\rm{div}} v^\varepsilon ) \rho^\varepsilon = 0 & \text{ in } \Omega_T^\varepsilon,\\
\rho^\varepsilon|_{t = 0} = \rho_0 & \text{ in } \Omega_0.
\end{cases}
\end{equation*}
Then
\begin{equation}
\int_{ \Omega^\varepsilon (t) } \frac{1}{2} \rho^\varepsilon | v^\varepsilon |^2 { \ } d x = \int_{\Omega (0)} \frac{1}{2} \rho_0 ( \xi ) \tilde{x}_t^\varepsilon \cdot \tilde{x}_t^\varepsilon { \ }d \xi. \label{eq45}
\end{equation}

\end{lemma}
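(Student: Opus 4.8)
The plan is to reduce all three identities to a single Lagrangian mass-conservation formula, namely
\begin{equation*}
\rho ( \tilde{x} ( \xi , t) , t ) J ( \xi , t ) = \rho_0 ( \xi ),
\end{equation*}
and then apply the change of variables formula termwise.

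First I would record the pointwise Jacobian identity $\frac{d J}{d t} = ( {\rm{div}} v ) ( \tilde{x} ( \xi , t) , t ) J$, which is already contained in Lemma \ref{lem33}: equation \eqref{eq34} gives $\frac{d J}{d t} = ( \acute{g}_k \cdot g^k ) J$, while the computation preceding \eqref{eq35} shows $( {\rm{div}} v) ( \tilde{x} ( \xi , t) , t ) = g^\ell \cdot \acute{g}_\ell = \acute{g}_k \cdot g^k$. Combining these yields the claim.

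Next I set $R ( \xi , t ) := \rho ( \tilde{x} ( \xi , t) , t ) J ( \xi , t )$ and differentiate in $t$. Since $\frac{d \tilde{x}}{d t} = v ( \tilde{x} , t )$, the chain rule gives $\frac{d}{d t} \rho ( \tilde{x} , t ) = \{ \partial_t \rho + ( v , \nabla ) \rho \} ( \tilde{x} , t )$, and therefore
\begin{equation*}
\frac{d R}{d t} = \{ \partial_t \rho + ( v , \nabla ) \rho + ( {\rm{div}} v ) \rho \} ( \tilde{x} , t ) J = 0
\end{equation*}
by the continuity equation. Because $\tilde{x} ( \xi , 0 ) = \xi$ forces $J ( \xi , 0 ) = 1$, we obtain $R ( \xi , t ) = R ( \xi , 0 ) = \rho_0 ( \xi )$, which is the desired identity.

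With this in hand, each of \eqref{eq42}, \eqref{eq43}, \eqref{eq44} follows from the change of variables $\int_{\Omega (t)} f \, d x = \int_{\Omega (0)} f ( \tilde{x} , t ) J \, d \xi$: taking $f = \frac{1}{2} \rho | v |^2$, $f = \rho e$, and $f = \rho F \cdot v$ respectively, the factor $\rho ( \tilde{x} , t ) J$ collapses to $\rho_0 ( \xi )$, while the remaining factors are evaluated along the flow using $v ( \tilde{x} , t ) = \tilde{x}_t$. Assertion $(\mathrm{ii})$ is proved by the identical argument with every object carrying the superscript $\varepsilon$, using \eqref{eq41} in place of \eqref{eq36} to get $\frac{d J^\varepsilon}{d t} = ( {\rm{div}} v^\varepsilon ) ( \tilde{x}^\varepsilon , t ) J^\varepsilon$ and hence $\rho^\varepsilon ( \tilde{x}^\varepsilon , t ) J^\varepsilon = \rho_0 ( \xi )$, which yields \eqref{eq45}. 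The only genuine content is the pointwise Jacobian identity together with the observation that the continuity equation is exactly the statement $\frac{d R}{d t} = 0$; once these are isolated the remainder is routine bookkeeping, so I anticipate no substantive obstacle, all the needed ingredients being supplied by Lemma \ref{lem33}.
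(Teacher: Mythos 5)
Your proposal is correct and follows essentially the same route as the paper's proof: both establish the Lagrangian mass-conservation identity $\rho(\tilde{x}(\xi,t),t)\,J(\xi,t)=\rho_0(\xi)$ by differentiating $\rho(\tilde{x},t)J$ in time, invoking the continuity equation together with the Jacobian identity from Lemma \ref{lem33}, and then deduce \eqref{eq42}--\eqref{eq45} by the change of variables, with the $\varepsilon$-case handled identically. The only difference is that you spell out the chain-rule computation, the identity $\frac{dJ}{dt}=(\mathrm{div}\,v)J$, and the normalization $J(\xi,0)=1$, which the paper leaves implicit.
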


\begin{proof}[Proof of Lemma \ref{lem43}]
From \eqref{eq36} and \eqref{eq35}, we check that
\begin{equation*}
\frac{d }{d t} \{ \rho ( \tilde{x} ( \xi , t ) , t ) J ( \xi , t) \} = \{ \partial_t \rho + ( v , \nabla ) \rho  + ({\rm{div}} v ) \rho \} J = 0. 
\end{equation*}
Integrating with respect to time, we see that
\begin{equation*}
\rho (\tilde{x} ( \xi , t) ,t ) J (\xi , t) = \rho_0 (\xi ).
\end{equation*}
Thus, we have
\begin{equation*}
\rho ( \tilde{x} (\xi , t) , t ) = \frac{\rho_0 ( \xi )}{J (\xi , t )}.
\end{equation*}
Similarly, we see that
\begin{equation*}
\rho^\varepsilon ( \tilde{x}^\varepsilon (\xi , t) , t ) = \frac{\rho_0 ( \xi )}{J^\varepsilon (\xi , t )}.
\end{equation*}
Using the above equalities, we prove \eqref{eq42}-\eqref{eq45}. Therefore the lemma follows.
\end{proof}

Let us now prove Theorem \ref{thm27}.
\begin{proof}[Proof of Theorem \ref{thm27}]

Assume that there are smooth functions $\tilde{y}$ and $z$ such that for $\xi \in \Omega (0)$ and $0 < t < T$,
\begin{equation*}
\begin{cases}
\frac{d}{d \varepsilon} \big|_{\varepsilon = 0} \tilde{x}^\varepsilon (\xi , t) = \tilde{y} (\xi , t),\\
z (\tilde{x} (\xi , t ) , t ) = \tilde{y} (\xi , t). 
\end{cases}
\end{equation*}
We now prove that
\begin{equation}
\tilde{y} ( \xi , 0 ) = 0. \label{eq46}
\end{equation}
Since $\tilde{x} (\xi , 0) = \tilde{x}^\varepsilon (\xi, 0) = \xi$, we see that
\begin{equation*}
\tilde{x}^\varepsilon (\xi , 0 ) - \tilde{x}(\xi , 0) = 0.
\end{equation*}
This gives $\tilde{y} (\xi , 0) = 0$.
\noindent Since
\begin{equation*}
\int_0^T \int_{\Omega^\varepsilon (t)} \frac{1}{2} \rho^\varepsilon | v^\varepsilon |^2 { \ } d x d t = \int_0^T \int_{\Omega (0)} \frac{1}{2} \rho_0 ( \xi ) \tilde{x}^\varepsilon_t (\xi , t) \cdot \tilde{x}_t^\varepsilon ( \xi , t) { \ } d \xi  d t,
\end{equation*}
we use the integration by parts and \eqref{eq46} to see that
\begin{align*}
\frac{d}{ d \varepsilon} \bigg|_{\varepsilon = 0} \int_0^T \int_{\Omega^\varepsilon (t)} \frac{1}{2} \rho^\varepsilon | v^\varepsilon |^2 { \ } d x d t & = \int_0^T \int_{\Omega (0)} \rho_0 ( \xi ) \tilde{x}_t (\xi , t) \cdot \tilde{y}_t ( \xi , t) { \ } d \xi  d t\\
& = \int_0^T \int_{\Omega (0)} \rho_0 ( \xi ) v ( \tilde{x} (\xi , t), t ) \cdot \tilde{y}_t ( \xi , t) { \ } d \xi  d t\\
& = - \int_0^T \int_{\Omega (0)} \rho_0 ( \xi ) D_t v \cdot \tilde{y} ( \xi , t) d \xi d t\\
& = - \int_0^T \int_{\Omega (t)} \{ \rho D_t  v \} (x,t) \cdot z (x,t){ \ }d x d t.
\end{align*}
Therefore Theorem \ref{thm27} is proved.
\end{proof}

\section{Variation of the Velocity to Dissipation Energies and Works}\label{sect5}

In this section we prove Theorems \ref{thm22} and \ref{thm23}. To this end, we prepare the following lemma.

\begin{lemma}[Variations of the velocity to dissipation energies]\label{lem51}
 For smooth functions $V = V (x,t) = { }^t (V_1 , V_2 , V_3)$ and $f = f ( x ,t )$,
\begin{align*}
E_{D_1} [V] (t) & : = - \int_{ \Omega (t)} \frac{1}{2} e_1 ( |D_+ ( V) |^2 ) { \ }d x,\\
E_{D_2} [V] (t) & : = - \int_{ \Omega (t)} \frac{1}{2} e_2 ( | {\rm{div}} V |^2 ) { \ }d x,\\
E_{D_3} [V] (t) & : = - \int_{ \Omega (t)} \frac{1}{2} e_3 ( |D_- ( V) |^2 ) { \ }d x,\\
E_{D_4} [f] (t)  & := - \int_{\Omega (t)} \frac{1}{2} e_4 (| {\rm{grad}} f |^2) { \ } d x.
\end{align*}
Then for every $\varphi = { }^t ( \varphi_1 , \varphi_2 , \varphi_3 ) \in C_0^\infty (\Omega (t))$ and $\phi \in C_0^\infty ( \Omega (t))$,
\begin{align*}
\frac{d}{d \varepsilon} \bigg|_{\varepsilon = 0} E_{D_1} [v + \varepsilon \varphi ] (t) & = \int_{\Omega (t)} {\rm{div}} \{ e_1' ( |D_+ ( v) |^2 ) D_+ (v) \} \cdot \varphi { \ } d x,\\
\frac{d}{d \varepsilon} \bigg|_{\varepsilon = 0} E_{D_2} [v + \varepsilon \varphi ] (t) & = \int_{\Omega (t)} {\rm{div}} \{ e_2' ( | {\rm{div}} v  |^2 ) ({\rm{div}} v )I_{ 3 \times 3} \} \cdot \varphi { \ } d x,\\
\frac{d}{d \varepsilon} \bigg|_{\varepsilon = 0} E_{D_3} [v + \varepsilon \varphi ] (t) & = \int_{\Omega (t)} {\rm{div}} \{ e_3' ( |D_- ( v) |^2 ) D_- (v) \} \cdot \varphi { \ } d x,\\
\frac{d}{d \varepsilon} \bigg|_{\varepsilon = 0} E_{D_4} [ \theta + \varepsilon \phi ] (t) & = \int_{\Omega (t)} {\rm{div}} \{ e_4' ( | {\rm{grad}} \theta |^2 ) {\rm{grad} \theta} \} \cdot \phi { \ } d x. 
\end{align*}
\end{lemma}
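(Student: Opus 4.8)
The plan is to treat all four identities in the same way: differentiate the integrand under the integral sign in $\varepsilon$, apply the chain rule to the $C^1$-function $e_j$, and then integrate by parts, using that the test function has compact support so that no boundary term survives. Because we perturb only the field and not the region (the domain $\Omega(t)$ is fixed at the given $t$), there is no moving-boundary issue, and differentiation under the integral is legitimate by the smoothness of $e_j$ together with the compact support of $\varphi$ (resp.\ $\phi$).

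First I would use that $D_+$, $D_-$, ${\rm{div}}$, and ${\rm{grad}}$ are linear in their argument. For instance $D_+(v+\varepsilon\varphi)=D_+(v)+\varepsilon D_+(\varphi)$, hence
\[
|D_+(v+\varepsilon\varphi)|^2 = |D_+(v)|^2 + 2\varepsilon\, D_+(v):D_+(\varphi) + \varepsilon^2|D_+(\varphi)|^2,
\]
and differentiating the composition with $e_1$ at $\varepsilon=0$ gives
\[
\frac{d}{d\varepsilon}\bigg|_{\varepsilon=0} E_{D_1}[v+\varepsilon\varphi](t) = -\int_{\Omega(t)} e_1'(|D_+(v)|^2)\, D_+(v):D_+(\varphi)\ dx.
\]
The same manipulation, now with ${\rm{div}}(v+\varepsilon\varphi)={\rm{div}}\,v+\varepsilon\,{\rm{div}}\,\varphi$, with $D_-(v+\varepsilon\varphi)=D_-(v)+\varepsilon D_-(\varphi)$, and with ${\rm{grad}}(\theta+\varepsilon\phi)={\rm{grad}}\,\theta+\varepsilon\,{\rm{grad}}\,\phi$, produces the analogous first-variation integrals for $E_{D_2}$, $E_{D_3}$, and $E_{D_4}$.

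The key algebraic step is to trade the contraction against the symmetric (resp.\ antisymmetric) part for a contraction against the full gradient. Since $A:=e_1'(|D_+(v)|^2)D_+(v)$ is symmetric, it is annihilated by the antisymmetric part of $\nabla\varphi$, so $A:D_+(\varphi)=A:\nabla\varphi$; dually, in the $E_{D_3}$ case the matrix $e_3'(|D_-(v)|^2)D_-(v)$ is antisymmetric and the symmetric part of $\nabla\varphi$ drops out, giving $A:D_-(\varphi)=A:\nabla\varphi$. For $E_{D_2}$ one simply notes $({\rm{div}}\,v)I_{3}:\nabla\varphi=({\rm{div}}\,v)({\rm{div}}\,\varphi)$, and for $E_{D_4}$ the scalar case is immediate.

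With these identities I would integrate by parts: writing $A:\nabla\varphi=[A]_{ij}\partial_j\varphi_i$ and moving the derivative off $\varphi$, the compact support kills the boundary term, so $\int_{\Omega(t)}A:\nabla\varphi\ dx=-\int_{\Omega(t)}({\rm{div}}\,A)\cdot\varphi\ dx$, which reverses the overall sign and yields precisely the claimed formulas; the scalar identity $-\int w\cdot{\rm{grad}}\,\phi\ dx=\int({\rm{div}}\,w)\phi\ dx$ handles $E_{D_4}$. The only point requiring genuine care is the symmetry/antisymmetry cancellation, since this is exactly what ensures the integrand appears as the divergence of the stress-type tensor itself rather than of its symmetrized part; the remaining steps are routine differentiation and integration by parts.
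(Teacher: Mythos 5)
Your proposal is correct and takes essentially the same route as the paper: compute the first variation via linearity of $D_\pm$, ${\rm div}$, ${\rm grad}$ and the chain rule, then integrate by parts, with compact support of the test function eliminating boundary terms. The symmetry/antisymmetry cancellation you spell out (a symmetric, resp.\ antisymmetric, matrix $A$ satisfies $A:D_+(\varphi)=A:\nabla\varphi$, resp.\ $A:D_-(\varphi)=A:\nabla\varphi$) is precisely the step the paper leaves implicit in its phrase ``using integration by parts, we check that.''
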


\begin{proof}[Proof of Lemma \ref{lem51}]
A direct calculation gives
\begin{align*}
\frac{d }{ d \varepsilon } \bigg|_{\varepsilon = 0} E_{D_1} [ v + \varepsilon \varphi ] & = - \int_{\Omega (t) } e'_1 (| D_+ ( v ) |^2 ) D_+ (v) : D_+ (\varphi ) { \ } d x,\\
\frac{d }{ d \varepsilon } \bigg|_{\varepsilon = 0} E_{D_2} [ v + \varepsilon \varphi ] & = - \int_{\Omega (t) } e'_2 (| {\rm{div}} v |^2 ) ({\rm{div}} v) ({\rm{div}} \varphi ) { \ } d x,\\
\frac{d }{ d \varepsilon } \bigg|_{\varepsilon = 0} E_{D_3} [ v + \varepsilon \varphi ] & = - \int_{\Omega (t) } e'_3 (| D_- ( v ) |^2 ) D_- (v) : D_- (\varphi ) { \ } d x,\\
\frac{d }{ d \varepsilon } \bigg|_{\varepsilon = 0} E_{D_4} [ \theta + \varepsilon \phi ] & = - \int_{\Omega (t) } e'_4 (| {\rm{grad}} \theta |^2 ) {\rm{grad}} \theta \cdot {\rm{grad}} \phi { \ } d x.
\end{align*}
Using integration by parts, we check that
\begin{align*}
- \int_{\Omega (t) } e'_1 (| D_+ ( v ) |^2 ) D_+ (v) : D_+ (\varphi ) { \ } d x & = \int_{\Omega (t) } {\rm{div}} \{ e'_1 (| D_+ ( v ) |^2 ) D_+ (v) \} \cdot \varphi  { \ } d x,\\
- \int_{\Omega (t) } e'_2 (| {\rm{div}} v |^2 ) ({\rm{div}} v) ({\rm{div}} \varphi ) { \ } d x & = \int_{\Omega (t) } {\rm{div}} \{ e'_2 (| {\rm{div}} v |^2 ) ({\rm{div}} v) I_{3 \times 3} \}  \cdot \varphi  { \ } d x,\\
- \int_{\Omega (t) } e'_3 (| D_- ( v ) |^2 ) D_- (v) : D_- (\varphi ) { \ } d x & = \int_{\Omega (t) } {\rm{div}} \{ e'_3 (| D_- ( v ) |^2 ) D_- (v) \}  \cdot \varphi  { \ } d x ,\\
- \int_{\Omega (t) } e'_4 (| {\rm{grad}} \theta |^2 ) {\rm{grad}} \theta \cdot {\rm{grad}} \phi { \ } d x & = \int_{\Omega (t) } {\rm{div}} \{ e'_4 (| {\rm{grad}} \theta |^2 ) {\rm{grad}} \theta \} \phi { \ } d x .
\end{align*}
Therefore Lemma \ref{lem51} is proved.
\end{proof}

\begin{proof}[Proof of Theorems \ref{thm22} and \ref{thm23}]
Using integration by parts, Lemma \ref{lem51}, and Proposition \ref{prop25}, we prove Theorems \ref{thm22} and \ref{thm23}.
\end{proof}

\section{Energetic Variational Approaches for Non-Newtonian Fluid Systems}\label{sect6}

In this section we make several mathematical models of non-Newtonian fluid flow. In subsection \ref{subsec61} we apply an energetic variational approach and the first law of thermodynamics to derive the generalized compressible non-Newtonian fluid system \eqref{eq11}. In subsection \ref{subsec62} we study the enthalpy, entropy, free energy, conservative forms, and conservation laws of the system \eqref{eq11}. In subsection \ref{subsec63} we make use of an energetic variational approach and Proposition \ref{prop25} to derive the generalized incompressible non-Newtonian fluid system.

\subsection{Energetic Variational Approaches for Compressible Fluid System}\label{subsec61}

Let us apply our energetic variational approaches to derive the generalized compressible non-Newtonian fluid system. We assume that $\Omega (t)$ is flowed by the velocity $v$. We set the energy densities for non-Newtonian as in Assumption \ref{ass11}. Based on Proposition \ref{prop26} we admit
\begin{equation}
\partial_t \rho + ( v , \nabla ) \rho  + ({\rm{div}} v ) \rho =0 . \label{eq61}
\end{equation}

We first derive the momentum equation of our compressible fluid system. Set
\begin{equation*}
S (v , \sigma ) = e'_1 ( |D_+ ( v) |^2 ) D_+ (v) + e'_2 (|{\rm{div}} v |^2 ) ({\rm{div}} v) I_{3}+ e'_3 ( | D_- (v) |^2) D_- (v) - \sigma I_{3} . 
\end{equation*}
From Theorems \ref{thm23} and \ref{thm27}, we have the following forces:
\begin{align*}
\frac{\delta E_{D + W}}{\delta v} & = {\rm{div}} S ( v , \sigma ) + \rho F,\\
\frac{\delta A}{\delta \tilde{x}} & = \rho D_t v .
\end{align*}
We assume the following energetic variational principle:
\begin{equation*}
\frac{\delta A}{\delta \tilde{x}} = \frac{\delta E_{D + W}}{\delta v} 
\end{equation*}
to have 
\begin{equation*}
\rho D_t v = {\rm{div}} S ( v , \sigma ) + \rho F.
\end{equation*}
This is equivalent to
\begin{equation}
\rho D_t v + {\rm{grad}} \sigma = {\rm{div}} S ( v , 0 ) + \rho F. \label{eq62}
\end{equation}

Secondly, we apply the first law of thermodynamics to derive the dominant equation for the internal energy. To this end, we now consider both the energy dissipation due the viscosities and the work done by the pressure. Multiplying the system \eqref{eq62} by $v$, then using integration by parts and Gauss's divergence theorem, we observe that for $0 < t_1 < t_2 <T$,
\begin{multline*}
\int_{\Omega ( t_2 )} \frac{1}{2} \rho | v |^2 { \ }d x + \int_{t_1}^{t_2} \int_{\Omega ( \tau )} (\tilde{e}_D - ({\rm{div}} v ) \sigma ) { \ } d x d \tau \\
= \int_{\Omega (t_1)} \frac{1}{2} \rho | v |^2 { \ } d x + \int_{t_1}^{t_2} \int_{\Omega ( \tau )} \rho F \cdot v { \ } d x d \tau + \int_{t_1}^{t_2} \int_{\partial \Omega ( \tau )} BC { \ }d S d \tau .
\end{multline*}
Here
\begin{equation*}
\tilde{e}_D =  e'_1 (| D_+ ( v) |^2) |D_+(v)|^2 + e'_2 (| {\rm{div}} v |^2) | {\rm{div}} v |^2 ) + e'_3 (| D_- (v)|^2 ) |D_-(v)|^2,
\end{equation*}
\begin{multline*}
BC  = [e'_1 (| D_+ ( v) |^2) D_+(v) + e'_2 (| {\rm{div}} v |^2) ({\rm{div}} v ) I_{3 \times 3} \\
+ e'_3 (| D_- (v)|^2 ) D_-(v) - \sigma I_{3 \times 3} ] n |_{\partial \Omega (\tau )}. 
\end{multline*}
From now on we assume that $BC \equiv 0$. Then we have an energy equality:
\begin{multline*}
\int_{\Omega ( t_2 )} \frac{1}{2} \rho | v |^2 { \ }d x + \int_{t_1}^{t_2} \int_{\Omega ( \tau )} (\tilde{e}_D - ({\rm{div}} v ) \sigma ) { \ } d x d \tau \\
= \int_{\Omega (t_1)} \frac{1}{2} \rho | v |^2 { \ } d x +  \int_{t_1}^{t_2} \int_{\Omega ( \tau )} \rho F \cdot v { \ } d x d \tau .
\end{multline*}
This shows that $\tilde{e}_D - ({\rm{div}} v ) \sigma $ is the dissipation energy and the work done by the pressure on our compressible fluid system. From Theorem \ref{thm24} we have the following forces:
\begin{align}
\frac{\delta E_{TD}}{ \delta \theta } & = {\rm{div}} \{ e'_4 ( |{\rm{grad}}  \theta |^2 ) {\rm{grad}} \theta \}, \label{eq63}\\
\frac{\delta E_{GD}}{ \delta C} & = {\rm{div}} \{ e'_5 ( |{\rm{grad}}  C |^2 ) {\rm{grad}} C \}.\label{eq64}
\end{align}
We apply the first law of thermodynamics to derive
\begin{equation}
\rho D_t e + ({\rm{div}} v ) \sigma = {\rm{div}} \{ e'_4 (|{\rm{grad}}  \theta |^2 ) {\rm{grad}} \theta \} + \tilde{e}_D . \label{eq65}
\end{equation}
More precisely, we assume that for every $U (t) \subset \Omega (t)$ flowed by the velocity $v$,
\begin{equation*}
\frac{d}{d t} \int_{U (t)} \rho e { \ } d x = \int_{U (t)} \left\{ \frac{\delta E_{TD}}{ \delta \theta }  + \tilde{e}_D - ({\rm{div}} v ) \sigma \right\} { \ } d x.
\end{equation*}
Then we have \eqref{eq65}.

Finally, we derive the generalized diffusion system. We assume that the change of rate of the concentration $C$ equals to the force derived from a variation of the energy dissipation due to general diffusion, that is, for every $U (t) \subset \Omega (t)$ flowed by the velocity $v$, assume that
\begin{equation*}
\frac{d}{d t} \int_{U (t)} C { \ } d x = \int_{U (t)} \frac{\delta E_{GD}}{ \delta C } { \ } d x.
\end{equation*}
Applying the Reynold transport theorem, we have
\begin{equation}
D_t C + ({\rm{div}} v ) C = {\rm{div}} \{ e_5' (|{\rm{grad}}  C |^2 ) {\rm{grad}} C \}. \label{eq66}
\end{equation}
Combining \eqref{eq61}, \eqref{eq62}, \eqref{eq65}, and \eqref{eq66}, we therefore have our compressible non-Newtonian fluid system \eqref{eq11}.

\subsection{On Generalized Compressible non-Newtonian Fluid System}\label{subsec62}{ \ }\\
Let us consider the generalized compressible non-Newtonian fluid system \eqref{eq11}. We first study the enthalpy, entropy, and free energy of our compressible fluid system. Then we consider the total energy, conservative forms, and conservation laws of the system \eqref{eq11}. We admit the system \eqref{eq11}.

Assume that $\rho$ and $\theta$ are positive functions. Set the enthalpy $h = h (x,t)$ as follows $h = e + \sigma / \rho$. Then
\begin{equation*}
\rho D_t h = {\rm{div}} \{ e'_4 (|{\rm{grad} } \theta |^2 ) {\rm{grad}} \theta \} + \tilde{e}_D + D_t \sigma \text{ in } \Omega_T .
\end{equation*}
Assume that the entropy $s = s ( x ,t )$ satisfies the Gibbs condition: 
\begin{equation*}
D_t e = \theta D_t s - \sigma D_t \left( \frac{1}{\rho} \right) \text{ in } \Omega_T .
\end{equation*}
Then
\begin{equation*}
\theta \rho D_t s = {\rm{div}} \{ e'_4 (|{\rm{grad} } \theta |^2 ) {\rm{grad}} \theta \} + \tilde{e}_D \text{ in } \Omega_T. 
\end{equation*}
Set the Helmholtz free energy $e_F = e - \theta s$. An easy calculation gives
\begin{equation*}
\rho D_t e_F + s \rho D_t \theta - S (v , \sigma ) : ( D_+ (v) + D_- (v) ) = - \tilde{e}_D \text{ in } \Omega_T.
\end{equation*}
Therefore we have
\begin{equation}\label{eq67}
\begin{cases}
\rho D_t h = {\rm{div}} \{ e'_4 (|{\rm{grad} } \theta |^2 ) {\rm{grad}} \theta \} + \tilde{e}_D + D_t \sigma  &\text{ in } \Omega_T,\\
\theta \rho D_t s = {\rm{div}} \{ e'_4 (|{\rm{grad} } \theta |^2 ) {\rm{grad}} \theta \} + \tilde{e}_D &\text{ in } \Omega_T,\\
\rho D_t e_F + s \rho D_t \theta - S (v , \sigma ) : ( D_+ (v) + D_- (v) ) = - \tilde{e}_D &\text{ in } \Omega_T.
\end{cases}
\end{equation}

Next we consider conservative form of the system \eqref{eq11}. Set the total energy $e_A = \rho | v |^2 /2 + \rho e$. We easily check that the systems \eqref{eq11} and \eqref{eq67} satisfy \eqref{eq12} and \eqref{eq13}. 

Finally, we consider the conservation laws the system \eqref{eq11} to prove Theorem \ref{thm28}. We assume that $\Omega (t)$ is flowed by the velocity $v$. Assume that for each $0 < t < T$,
\begin{align}
e'_4 (|{\rm{grad}} \theta |^2 ) (n , \nabla ) \theta |_{\partial \Omega (t)} & = 0,\label{eq68}\\
e'_5 (|{\rm{grad}} C |^2 ) (n , \nabla ) C |_{\partial \Omega (t)} & = 0,\label{eq69}
\end{align}
and
\begin{multline}\label{eq610}
[e'_1 (| D_+ ( v) |^2) D_+(v) + e'_2 (| {\rm{div}} v |^2) ({\rm{div}} v ) I_{3 \times 3} \\
+ e'_3 (| D_- (v)|^2 ) D_-(v) - \sigma I_{3 \times 3} ] n |_{\partial \Omega (t)} = { }^t (0 , 0 , 0 ).
\end{multline}
Since
\begin{align*}
\frac{d}{ d t} \int_{\Omega (t)} \rho v { \ }d x & = \int_{\Omega (t)} \rho D_t v { \ } d x ,\\
\frac{d}{ d t} \int_{\Omega (t)} e_A { \ }d x & = \int_{\Omega (t)} \{ \rho D_t v \cdot v + \rho D_t e \} { \ } d x ,\\
\frac{d}{ d t} \int_{\Omega (t)} C { \ }d x & = \int_{\Omega (t)}  \{ D_t C + ({\rm{div}}v ) C \} { \ } d x ,
\end{align*}
we use \eqref{eq11}, integration by parts, and the boundary conditions \eqref{eq68}-\eqref{eq610} to find that for $0 < t_1 < t_2 < T$,
\begin{align*}
\int_{\Omega (t_2)} \rho v { \ } d x & = \int_{\Omega (t_1)} \rho v { \ } d x + \int_{t_1}^{t_2} \int_{\Omega ( \tau )} \rho F { \ }d x d \tau ,\\
\int_{\Omega (t_2)} e_A { \ } d x & = \int_{\Omega (t_1)} e_A { \ } d x + \int_{t_1}^{t_2} \int_{ \Omega ( \tau )} \rho F \cdot v { \ }d x d \tau,\\
\int_{\Omega (t_2)} C { \ } d x & = \int_{\Omega (t_1)} C { \ } d x.
\end{align*}
Therefore the assertion $(\mathrm{i})$ of Theorem \ref{thm28}. 

Next we show the assertion $(\mathrm{ii})$. Suppose that $e_3 ( r ) = \mu_3 r$ for some $\mu_3 \in \mathbb{R}$. Assume that for $0 < t <T$,
\begin{multline}\label{eq611}
[(e'_1 (| D_+ ( v) |^2) + \mu_3 )D_+(v) + (e'_2 (| {\rm{div}} v |^2) - \mu_3) ({\rm{div}} v ) I_{3} - \sigma I_{3} ] n |_{\partial \Omega (t)} \\
= { }^t (0 , 0 , 0 ).
\end{multline}
Applying a flow map and the Reynold transport theorem, we check that
\begin{align*}
\frac{d}{d t} \int_{\Omega (t)} x \times \rho v { \ }d x & = \frac{d}{d t} \int_{\Omega (0)} \tilde{x} ( \xi , t ) \times \rho_0 (\xi ) v (\tilde{x}(\xi , t ) , t ) { \ }d \xi\\
= &  \int_{\Omega (t)} x \times \rho D_t v { \ }d x\\
= & \int_{\Omega (t)} x \times ( {\rm{div}} S ( v , \sigma ) + \rho F ) { \ }d x. 
\end{align*}
Here we used the fact that $v \times v = 0$. From $e_3 (r) = \mu_3 r$, we find that
\begin{equation*}
{\rm{div}} \{ e_3' (|D_- (v)|^2) D_- (v)  \}  = {\rm{div}} \{ \mu_3 D_+ (v) - \mu_3 ({\rm{div}} v ) I_{3 \times 3} \} .
\end{equation*}
Since
\begin{equation*}
( e_1' (|D_+ (v)|^2) + \mu_3 ) D_+ (v) + ( e_2' (| {\rm{div}} v|^2 ) - \mu_3 ) ({\rm{div}}v ) I_{3 \times 3} - \sigma I_{3 \times 3}
\end{equation*}
is a symmetric matrix, we use the integration by parts and the assumption \eqref{eq611} to have
\begin{equation*}
\int_{\Omega (t)} x \times {\rm{div}} S ( v , \sigma ) { \ }d x = 0. 
\end{equation*}
Therefore we have
\begin{equation*}
\int_{\Omega (t_2)} x \times \rho v { \ } d x = \int_{\Omega (t_1)} x \times \rho v { \ } d x + \int_{t_1}^{t_2} \int_{\Omega ( \tau )} x \times \rho F { \ }d x d \tau .
\end{equation*}
Therefore Theorem \ref{thm28} is proved.

\subsection{Energetic Variational Approaches for Incompressible Fluid System}\label{subsec63}
Let us apply our energetic variational approaches to derive the generalized incompressible non-Newtonian fluid system. We assume that $\Omega (t)$ is flowed by the velocity $v$. We set the energy densities for non-Newtonian as in Assumption \ref{ass11}.

We first consider the continuity equation and incompressible condition of our non-Newtonian fluid system. From Proposition \ref{prop26} we see that if ${\rm{div}}v = 0$ then
\begin{equation*}
\frac{d }{d t} \int_{\Omega (t)} 1 { \ }d x = \int_{\Omega (t)} {\rm{div}} v { \ } d x = 0.
\end{equation*}
Integrating with respect to time, we find that for $0 < t_1 < t_2 < T$
\begin{equation*}
\text{Vol}(\Omega (t_2)) =\int_{\Omega (t_2)} 1 { \ } d x = \int_{\Omega (t_1)} 1 { \ } d x = \text{Vol} (\Omega (t_1)) .
\end{equation*}
We also see that if ${\rm{div}} v =0$ then for each $U(t) \subset \Omega (t)$ flowed by the velocity $v$,
\begin{equation*}
\frac{d }{d t} \int_{U (t)} 1 { \ }d x = 0.
\end{equation*}
This means the local volume preservation. Therefore we admit the following incompressible condition and continuity equation:
\begin{align}
{\rm{div}} v = 0 \label{eq612},\\
D_t \rho =0 \label{eq613}.
\end{align}

Secondly we derive the momentum equation of our incompressible fluid system. From Theorems \ref{thm23} and \ref{thm27}, we have the following forces:
\begin{align*}
\frac{\delta E_{D + W}}{\delta v} \bigg|_{{\rm{div}} \varphi = 0} & = {\rm{div}} \{ e_1' ( |D_+ ( v) |^2 ) D_+ (v) + e_3' ( | D_- (v) |^2) D_- (v) \} - {\rm{grad}} \sigma + \rho F,\\
\frac{\delta A}{\delta \tilde{x}} & = \rho D_t v .
\end{align*}
We assume the following energetic variational principle:
\begin{equation*}
\frac{\delta A}{\delta \tilde{x}} = \frac{\delta E_{D + W}}{\delta v} 
\end{equation*}
to have 
\begin{equation}
\rho D_t v + {\rm{grad}} \sigma = {\rm{div}} \{ e'_1 (|D_+ (v) |^2 ) D_+ (v) + e'_3 (| D_- ( v ) |^2 ) D_- (v) \} + \rho F. \label{eq614}
\end{equation}
This is the momentum equation of our incompressible fluid system. Note that we may use 
\begin{equation}\label{eq615}
\frac{\delta A}{\delta \tilde{x}} \bigg|_{{\rm{div}} z = 0} = \rho D_t v  + {\rm{grad}} \sigma .
\end{equation}
See Proposition \ref{prop82} for details.

Thirdly, we derive the generalized heat and diffusion equations. Theorem \ref{thm24} gives \eqref{eq63} and \eqref{eq64}. We assume that the change of rate of the heat energy equals to the force derived from a variation of the energy dissipation due to thermal diffusion, that is, for every $U (t) \subset \Omega (t)$ flowed by the velocity $v$, assume that
\begin{equation*}
\frac{d}{d t} \int_{U (t)} \rho \theta { \ } d x = \int_{U (t)} \frac{\delta E_{TD}}{ \delta \theta } { \ } d x.
\end{equation*}
Applying the Reynold transport theorem, we have
\begin{equation}
\rho D_t \theta = {\rm{div}} \{ e_4' (|{\rm{grad}}  \theta |^2 ) {\rm{grad}} \theta \}. \label{eq616}
\end{equation}
Similarly, we assume that the change of rate of the concentration equals to the force derived from a variation of the energy dissipation due to general diffusion to derive
\begin{equation}
D_t C = {\rm{div}} \{ e_5' (|{\rm{grad}}  C |^2 ) {\rm{grad}} C \}. \label{eq617}
\end{equation}
Combining \eqref{eq612}-\eqref{eq617}, we have our incompressible non-Newtonian fluid system.

\section{Appendix (I): Derivation of Strain Rate Tensors and Fluxes}

In the Appendix (I) we introduce a method for deriving strain rate tensors and fluxes from our energy densities.

\begin{proposition}[Derivation of strain rate tensors and fluxes]\label{prop71}{ \ }\\
$(\mathrm{i})$ For $\vartheta = ( \vartheta_{i j})_{3 \times 3} \in M_{3\times 3} (\mathbb{R})$,
\begin{align*}
\mathbb{D} ( \vartheta ) := \vartheta, { \ }\mathbb{D}_+ (\vartheta ) := \frac{1}{2} \{ \mathbb{D}(\vartheta ) + { }^t (\mathbb{D} (\vartheta ) ) \},{ \ }\mathbb{D}_- (\vartheta ) := \frac{1}{2} \{ \mathbb{D}(\vartheta ) - { }^t (\mathbb{D} (\vartheta ) ) \}.
\end{align*}
Set
\begin{align*}
\mathcal{E}_1 [\vartheta ]  = - \frac{1}{2} e_1 ( | \mathbb{D}_+ (\vartheta ) |^2 ),{ \ }\mathcal{E}_2 [\vartheta ]  = - \frac{1}{2} e_2 ( | {\rm{Tr}} \mathbb{D} (\vartheta ) |^2 ),{ \ }\mathcal{E}_3 [\vartheta ]  = - \frac{1}{2} e_3 ( | \mathbb{D}_- (\vartheta ) |^2 ).
\end{align*}
Then
\begin{equation*}
\begin{pmatrix}
\frac{\partial \mathcal{E}_k}{\partial \vartheta_{11}} & \frac{\partial \mathcal{E}_k}{\partial \vartheta_{12}} & \frac{\partial \mathcal{E}_k}{\partial \vartheta_{13}} \\
\frac{\partial \mathcal{E}_k}{\partial \vartheta_{21}} & \frac{\partial \mathcal{E}_k}{\partial \vartheta_{22}} & \frac{\partial \mathcal{E}_k}{\partial \vartheta_{23}} \\
\frac{\partial \mathcal{E}_k}{\partial \vartheta_{31}} & \frac{\partial \mathcal{E}_k}{\partial \vartheta_{32}} & \frac{\partial \mathcal{E}_k}{\partial \vartheta_{33}} 
\end{pmatrix}\Bigg|_{(\vartheta_{i j} = \partial_j v_i)} = 
\begin{cases}
- e_1' (|D_+ (v) |^2) D_+ ( v ) & k =1,\\
- e_2' (|{\rm{div}} v |^2) ( {\rm{div}} v ) I_{3 \times 3} & k =2,\\
- e_3' (| D_- (v) |^2 ) D_- (v) & k =3.
\end{cases}
\end{equation*}
\noindent $(\mathrm{ii})$ For $\vartheta = { }^t ( \vartheta_1 , \vartheta_2 , \vartheta_3 ) \in \mathbb{R}^3$,
\begin{align*}
\mathcal{E}_4 [ \vartheta ] = - \frac{1}{2} e_4 (| \vartheta |^2),{ \ }\mathcal{E}_5 [ \vartheta ] = - \frac{1}{2} e_5 (| \vartheta |^2). 
\end{align*}
Then
\begin{align*}
& { }^t \left( \frac{\partial \mathcal{E}_4 }{\partial \vartheta_1}, \frac{\partial \mathcal{E}_4 }{\partial \vartheta_2 },  \frac{\partial \mathcal{E}_4 }{\partial \vartheta_3}  \right) \bigg|_{\vartheta_j = \partial_j \theta } = - e_4' (|{\rm{grad} }\theta |^2) {\rm{grad}} \theta,\\
& { }^t \left( \frac{\partial \mathcal{E}_5 }{\partial \vartheta_1}, \frac{\partial \mathcal{E}_5 }{\partial \vartheta_2 },  \frac{\partial \mathcal{E}_5 }{\partial \vartheta_3}  \right) \bigg|_{\vartheta_j = \partial_j C } = - e_5' (|{\rm{grad} }C |^2) {\rm{grad}} C .
\end{align*}
\end{proposition}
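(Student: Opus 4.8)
The plan is to treat the whole proposition as a direct application of the chain rule, since each $\mathcal{E}_k$ is the composition of the $C^1$ scalar function $e_k$ with a quadratic form in the entries of $\vartheta$. Thus the only genuine computation is differentiating the four quadratic forms $|\mathbb{D}_+(\vartheta)|^2$, $|\mathbb{D}_-(\vartheta)|^2$, $|{\rm{Tr}}\mathbb{D}(\vartheta)|^2$, and $|\vartheta|^2$ with respect to a single variable; everything else is the substitution $\vartheta_{ij}=\partial_j v_i$ (resp. $\vartheta_j=\partial_j\theta$) together with the matching of notation.

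The key step for part $(\mathrm{i})$, cases $k=1,3$, is the entrywise identity for the symmetric and antisymmetric parts. Writing $[\mathbb{D}_\pm(\vartheta)]_{ab}=\tfrac12(\vartheta_{ab}\pm\vartheta_{ba})$ and $|\mathbb{D}_\pm(\vartheta)|^2=\sum_{a,b}\tfrac14(\vartheta_{ab}\pm\vartheta_{ba})^2$, I would compute
\[
\frac{\partial}{\partial\vartheta_{ij}}|\mathbb{D}_\pm(\vartheta)|^2
=\sum_{a,b}\tfrac12(\vartheta_{ab}\pm\vartheta_{ba})(\delta_{ai}\delta_{bj}\pm\delta_{bi}\delta_{aj})
=\vartheta_{ij}\pm\vartheta_{ji}=2[\mathbb{D}_\pm(\vartheta)]_{ij}.
\]
The point to check carefully is that differentiating with respect to the single entry $\vartheta_{ij}$ collects contributions from both $\vartheta_{ij}$ and $\vartheta_{ji}$ inside the symmetrized (resp. antisymmetrized) expression, and these combine to reproduce exactly $2[\mathbb{D}_\pm(\vartheta)]_{ij}$. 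The chain rule then delivers $\partial\mathcal{E}_1/\partial\vartheta_{ij}=-e_1'(|\mathbb{D}_+(\vartheta)|^2)[\mathbb{D}_+(\vartheta)]_{ij}$ and $\partial\mathcal{E}_3/\partial\vartheta_{ij}=-e_3'(|\mathbb{D}_-(\vartheta)|^2)[\mathbb{D}_-(\vartheta)]_{ij}$.

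For $\mathcal{E}_2$ I would use ${\rm{Tr}}\mathbb{D}(\vartheta)=\sum_a\vartheta_{aa}$, so that $\partial({\rm{Tr}}\mathbb{D}(\vartheta))^2/\partial\vartheta_{ij}=2({\rm{Tr}}\mathbb{D}(\vartheta))\delta_{ij}$, giving $\partial\mathcal{E}_2/\partial\vartheta_{ij}=-e_2'(|{\rm{Tr}}\mathbb{D}(\vartheta)|^2)({\rm{Tr}}\mathbb{D}(\vartheta))\delta_{ij}$. For part $(\mathrm{ii})$ the computation is the elementary $\partial|\vartheta|^2/\partial\vartheta_i=2\vartheta_i$, yielding $\partial\mathcal{E}_4/\partial\vartheta_i=-e_4'(|\vartheta|^2)\vartheta_i$ and likewise for $\mathcal{E}_5$. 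The final step is the substitution $[\mathbb{D}(\vartheta)]_{ij}=\vartheta_{ij}=\partial_j v_i$, under which $\mathbb{D}_+(\vartheta)=D_+(v)$, $\mathbb{D}_-(\vartheta)=D_-(v)$, and ${\rm{Tr}}\mathbb{D}(\vartheta)={\rm{div}}v$; for part $(\mathrm{ii})$, $\vartheta_j=\partial_j\theta$ gives $|\vartheta|^2=|{\rm{grad}}\theta|^2$. Assembling the entrywise formulas into matrices (resp. a vector) produces exactly the claimed right-hand sides.

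I do not anticipate a substantial obstacle: the statement is a bookkeeping identity for derivatives of isotropic-type energies. The only place demanding genuine care is the symmetrization computation above, where one must not drop the cross term coming from $\vartheta_{ji}$; overlooking it would erroneously halve the result. One must also perform the evaluation $\vartheta_{ij}=\partial_j v_i$ consistently with the index convention used in $D_\pm(v)$, but since $D_+(v)$ and $D_-(v)$ are already the symmetric and antisymmetric parts this matching is immediate, and the remaining manipulations are routine.
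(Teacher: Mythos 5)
Your computation is correct and is exactly the ``simple calculation'' that the paper leaves to the reader (the paper states the proposition without writing out a proof). The chain rule plus the entrywise differentiation of the symmetrized/antisymmetrized quadratic forms, including the cross term from $\vartheta_{ji}$, is precisely what is needed, and your substitution $\vartheta_{ij}=\partial_j v_i$ matches the paper's index convention for $D_\pm(v)$.
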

The proof of Proposition \ref{prop81} can be proved by a simple calculation.

\section{Appendix (II): On Derivation of Inviscid Fluid Systems}\label{sect8}

In the Appendix (II), we introduce energetic variational approaches for the compressible and incompressible inviscid fluid systems. We often call the inviscid fluid systems the \emph{Euler systems}. In this paper we deal with the compressible barotropic fluid.

\subsection{Compressible Inviscid Fluid System}\label{subsec81}{ \ }\\
Let us derive the compressible inviscid fluid system. We admit $\rho D_t v + ({\rm{div}} v ) \rho = 0$ form Proposition \ref{prop26}. Now we derive the momentum equation of the compressible inviscid fluid system. We make use of a chemical potential to derive the pressure of the barotropic fluid.
\begin{proposition}\label{prop81}
Let $p $ be a $C^1$-function. Under the same assumption of Theorem \ref{thm27}, we set
\begin{equation*}
A_B [ \tilde{x}^\varepsilon ] = - \int_0^T \int_{\Omega^\varepsilon (t)} \left( \frac{1}{2} \rho^\varepsilon | v^\varepsilon |^2 - p (\rho^\varepsilon ) \right) { \ }d x d t.
\end{equation*}
Then
\begin{equation*}
\frac{d}{ d \varepsilon } \bigg|_{\varepsilon = 0}A_B [ \tilde{x}^\varepsilon ] = \int_0^T \int_{\Omega (t)} \{ \rho D_t v + {\rm{grad}} \mathfrak{p} \} ( x,t) \cdot z (x,t) { \ }d x d t,\end{equation*}
where $\mathfrak{p} = \mathfrak{p} ( \rho ) = \rho p' (\rho) - p (\rho)$ and $z$ is a variation of $\tilde{x}^\varepsilon$. 
\end{proposition}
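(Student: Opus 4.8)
The plan is to split the barotropic action into the kinetic action $A$ already treated in Theorem \ref{thm27} and a potential contribution, handling each separately. Since
\begin{equation*}
A_B[\tilde{x}^\varepsilon] = A[\tilde{x}^\varepsilon] + \int_0^T \int_{\Omega^\varepsilon(t)} p(\rho^\varepsilon)\, dx\, dt,
\end{equation*}
Theorem \ref{thm27} immediately supplies $\frac{d}{d\varepsilon}|_{\varepsilon=0}A[\tilde{x}^\varepsilon] = \int_0^T\int_{\Omega(t)}\{\rho D_t v\}\cdot z\, dx\, dt$. Thus it remains only to differentiate the potential term and to show it produces $\int_0^T\int_{\Omega(t)}{\rm{grad}}\,\mathfrak{p}\cdot z\, dx\, dt$.

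For the potential term I would first pull everything back to the fixed reference domain $\Omega(0)$ via the flow map. By the change of variables together with the mass identity $\rho^\varepsilon(\tilde{x}^\varepsilon(\xi,t),t) = \rho_0(\xi)/J^\varepsilon(\xi,t)$ established in the proof of Lemma \ref{lem43},
\begin{equation*}
\int_{\Omega^\varepsilon(t)} p(\rho^\varepsilon)\, dx = \int_{\Omega(0)} p\!\left(\frac{\rho_0(\xi)}{J^\varepsilon}\right) J^\varepsilon\, d\xi.
\end{equation*}
The advantage is that $\Omega(0)$ no longer depends on $\varepsilon$, so I may differentiate under the integral sign. Writing $\psi(J) = p(\rho_0/J)J$ and using $\rho_0/J = \rho$ at $\varepsilon = 0$, a direct computation gives $\psi'(J)|_{\varepsilon=0} = -\rho p'(\rho)+p(\rho) = -\mathfrak{p}(\rho)$, whence
\begin{equation*}
\frac{d}{d\varepsilon}\bigg|_{\varepsilon=0}\int_{\Omega^\varepsilon(t)}p(\rho^\varepsilon)\,dx = -\int_{\Omega(0)}\mathfrak{p}(\rho)\,\frac{dJ^\varepsilon}{d\varepsilon}\bigg|_{\varepsilon=0}\, d\xi.
\end{equation*}

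The key step, and the one I expect to be the main obstacle, is the evaluation of $\frac{dJ^\varepsilon}{d\varepsilon}|_{\varepsilon=0}$, which is the $\varepsilon$-analogue of the time-derivative identity \eqref{eq34}. Using $J^\varepsilon = \det(\nabla_\xi \tilde{x}^\varepsilon)$ and Jacobi's formula exactly as in the derivation of \eqref{eq34} gives $\frac{dJ^\varepsilon}{d\varepsilon}|_{\varepsilon=0} = J\,{\rm{Tr}}((\nabla_\xi \tilde{x})^{-1}\nabla_\xi \tilde{y})$. The defining relation $\tilde{y}(\xi,t) = z(\tilde{x}(\xi,t),t)$ yields by the chain rule $\nabla_\xi \tilde{y} = (\nabla_x z)(\nabla_\xi \tilde{x})$, so the trace collapses to ${\rm{Tr}}(\nabla_x z) = {\rm{div}}\, z$ evaluated at $x = \tilde{x}(\xi,t)$; hence $\frac{dJ^\varepsilon}{d\varepsilon}|_{\varepsilon=0} = J\,({\rm{div}}\, z)$. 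Confirming the sign and the precise form $\mathfrak{p} = \rho p'(\rho) - p(\rho)$ emerging from $\psi'$, together with this chain-rule bookkeeping, are the only delicate points.

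Finally I would push the integral forward to $\Omega(t)$ through $J\,d\xi = dx$, obtaining $-\int_{\Omega(t)}\mathfrak{p}(\rho)\,{\rm{div}}\,z\,dx$, and integrate by parts—the boundary term vanishing under the hypotheses inherited from Theorem \ref{thm27}—to rewrite it as $\int_{\Omega(t)}{\rm{grad}}\,\mathfrak{p}\cdot z\,dx$. Adding the kinetic contribution from Theorem \ref{thm27} and integrating in $t$ then produces
\begin{equation*}
\frac{d}{d\varepsilon}\bigg|_{\varepsilon=0}A_B[\tilde{x}^\varepsilon] = \int_0^T\int_{\Omega(t)}\{\rho D_t v + {\rm{grad}}\,\mathfrak{p}\}(x,t)\cdot z(x,t)\, dx\, dt,
\end{equation*}
which is the claimed formula.
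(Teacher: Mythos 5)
Your proof is correct and is essentially the argument the paper has in mind: the paper never writes out a proof of Proposition \ref{prop81}, stating only that it follows ``by an argument similar to that in Section \ref{sect4}'', and that is precisely your scheme --- Theorem \ref{thm27} disposes of the kinetic part, while the pressure part is pulled back to $\Omega(0)$ using $\rho^\varepsilon(\tilde{x}^\varepsilon(\xi,t),t)\,J^\varepsilon(\xi,t)=\rho_0(\xi)$ from Lemma \ref{lem43}, differentiated under the integral sign via Jacobi's formula (your identity $\frac{dJ^\varepsilon}{d\varepsilon}\big|_{\varepsilon=0}=J\,({\rm{div}}\,z)(\tilde{x}(\xi,t),t)$ is the $\varepsilon$-analogue of \eqref{eq34}--\eqref{eq36}, and your computation $\psi'(J)|_{\varepsilon=0}=p(\rho)-\rho p'(\rho)=-\mathfrak{p}(\rho)$ is right), and finally pushed forward and integrated by parts. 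The one inaccuracy is your justification of the last step: the boundary term $\int_{\partial\Omega(t)}\mathfrak{p}\,(z\cdot n)\,dS$ arising from the spatial integration by parts is not killed by any hypothesis ``inherited from Theorem \ref{thm27}'' --- that theorem integrates by parts only in time and imposes no condition on $z$ along $\partial\Omega(t)$, so its vanishing is an additional assumption (e.g. $z$ compactly supported in $\Omega(t)$, as the paper assumes in the incompressible counterpart Proposition \ref{prop82}, or $z\cdot n|_{\partial\Omega(t)}=0$). This caveat is equally implicit in the paper's own statement, which in the end is only tested against such variations to read off $\delta A_B/\delta\tilde{x}=\rho D_t v+{\rm{grad}}\,\mathfrak{p}$, so it is a shared bookkeeping gap rather than a flaw in your strategy.
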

\noindent Since we can prove Proposition \ref{prop81} by an argument similar to that in Section \ref{sect4}, the proof of above Proposition is left to the reader. See Forster \cite[Example 17]{For13} for another approach for Proposition \ref{prop81}.

From Proposition \ref{prop81}, we have
\begin{equation*}
\frac{ \delta A_B}{ \delta \tilde{x}} = \rho D_t v + {\rm{grad}} \mathfrak{p}.
\end{equation*}
This is the momentum equation of the compressible inviscid fluid system. Therefore we have the following compressible inviscid fluid system:
\begin{equation*}
\begin{cases}
D_t \rho + ({\rm{div}} v ) \rho = 0& \text{ in } \Omega_T,\\ 
\rho D_t v + {\rm{grad}} \mathfrak{p} = 0& \text{ in } \Omega_T,\\
\mathfrak{p} = \mathfrak{p} (\rho ) = \rho p' ( \rho ) - p ( \rho )& \text{ in } \Omega_T, 
\end{cases}
\end{equation*}
where $p$ is a $C^1$-function. We often call $p (\rho )$ a \emph{chemical potential}. See Koba \cite{K17} for mathematical derivation of the compressible inviscid fluid system on an evolving surface.

\subsection{Incompressible Inviscid Fluid System}\label{subsec82}{ \ }\\
Let us derive the incompressible inviscid fluid system. By the arguments in subsection \ref{subsec63}, we admit $D_t \rho = 0$ and ${\rm{div}}v = 0$. Based on Theorem \ref{thm27}, we apply Proposition \ref{prop25} to deduce following proposition:
\begin{proposition}\label{prop82}
Fix $t \in (0, T)$. For every $z \in [C_0^\infty ( \Omega (t))]^3$ satisfying ${\rm{div}} z = 0$, assume that
\begin{equation*}
\int_{\Omega (t)} \{ \rho D_t v \} (x,t) \cdot z(x,t) { \ }d x = 0.
\end{equation*}
Then there is a function $\sigma \in C^1 ( \Omega (t)) $ such that
\begin{equation*}
\rho D_t v = {\rm{grad}} \sigma .
\end{equation*}
\end{proposition}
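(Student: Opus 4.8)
The plan is to obtain Proposition \ref{prop82} as a direct consequence of the Helmholtz-type decomposition in Proposition \ref{prop25}. Throughout I fix $t \in (0,T)$ and regard every quantity as a function of the space variable $x$ on the bounded $C^2$-domain $\Omega (t)$. Set
\begin{equation*}
f (x) := \{ \rho D_t v \} (x,t) = \rho (x,t) \{ \partial_t v + ( v , \nabla ) v \} (x,t).
\end{equation*}
The whole proof then amounts to checking that $f$ lies in the space $G_2 (\Omega (t))$ appearing in Proposition \ref{prop25} and that $f$ is continuous, so that the resulting potential is automatically of class $C^1$.

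First I would verify the integrability and continuity of $f$. By the standing conventions of Section \ref{sect2}, $\rho$ and $v$ are smooth (at least $C^1$) functions, so $D_t v = \partial_t v + ( v , \nabla ) v$ is continuous and hence $f = \rho D_t v$ is continuous on $\Omega (t)$. Since $\Omega (t)$ is bounded, $f$ is bounded and therefore $f \in [ L^2 ( \Omega (t)) ]^3$. This places $f$ squarely within the setting of Proposition \ref{prop25}.

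Next I would identify the hypothesis of the present proposition with membership in $G_2 ( \Omega (t))$. By assumption, for every $z \in [ C_0^\infty ( \Omega (t)) ]^3$ with ${\rm{div}} z = 0$ we have
\begin{equation*}
\int_{\Omega (t)} f (x) \cdot z (x) { \ } d x = \int_{\Omega (t)} \{ \rho D_t v \}(x,t) \cdot z (x) { \ } d x = 0,
\end{equation*}
which is exactly the defining condition for $f \in G_2 ( \Omega (t))$ (taking $\varphi = z$). Applying Proposition \ref{prop25} then produces a function $\mathfrak{p} \in W^{1,2} ( \Omega (t))$ with $f = \nabla \mathfrak{p}$; and since $f$ is continuous, the final clause of Proposition \ref{prop25} upgrades $\mathfrak{p}$ to a $C^1$-function. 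Setting $\sigma := \mathfrak{p}$ gives $\rho D_t v = {\rm{grad}} \sigma$, as required.

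Because the argument is a verification rather than a construction, I do not expect a serious obstacle. The only point demanding genuine care is ensuring the \emph{continuity} of $f$ (not merely its $L^2$-membership): this is precisely what promotes the potential from $W^{1,2}$ to $C^1$ and thus delivers the asserted regularity of the pressure $\sigma$. That continuity is guaranteed by the smoothness hypotheses imposed on $\rho$ and $v$ in Section \ref{sect2}, so the application of Proposition \ref{prop25} is legitimate.
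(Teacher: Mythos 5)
Your proposal is correct and follows exactly the route the paper intends: the paper derives Proposition \ref{prop82} by applying Proposition \ref{prop25} (the Temam--Sohr characterization of $G_2(\Omega)$) to $f = \rho D_t v$, using the continuity of $f$ to upgrade the potential $\mathfrak{p}$ from $W^{1,2}$ to $C^1$, precisely as you do. The only cosmetic caveat is your claim that continuity on the bounded open set $\Omega(t)$ by itself yields boundedness (hence $L^2$); strictly this needs the smoothness/regularity up to the boundary implicit in the paper's standing assumptions, but this is the same level of rigor the paper itself employs.
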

\noindent The above proposition gives the momentum equation of the incompressible inviscid fluid system. From Theorem \ref{thm27} and Proposition \ref{prop82}, we obtain \eqref{eq615}. Consequently, we have the following incompressible inviscid fluid system:
\begin{equation*}
\begin{cases}
D_t \rho = 0 & \text{ in } \Omega_T,\\
{\rm{div}} v = 0& \text{ in } \Omega_T,\\ 
\rho D_t v + {\rm{grad}} \sigma = 0& \text{ in } \Omega_T.
\end{cases}
\end{equation*}
See Arnol'd \cite{Arn97} and Koba-Liu-Giga \cite{KLG17} for mathematical derivations of incompressible inviscid fluid systems on a manifold and an evolving surface, respectively.

\begin{flushleft}
{\bf{Acknowledgments:}}{ \ }\\
The authors would like to thank Professor Takayuki Kobayashi for his support.
\end{flushleft}

\begin{flushleft}
{\bf{Conflict of interest :}} The authors declare that they have no conflict of interest.
\end{flushleft}

\end{document}